\crefname{equation}{}{}
\newtheorem{theorem}{Theorem}
\newtheorem{remark}{Remark}
\newtheorem{proposition}{Proposition}
\renewcommand{\vec}[1]{\mbox{\boldmath${#1}$}}
\newcommand{\norm}[1]{\left|\left|#1\right|\right|}
\newcommand{\be}{\begin{equation}}
\newcommand{\ee}{\end{equation}}
\newcommand{\bea}{\begin{eqnarray}}
\newcommand{\eea}{\end{eqnarray}}
\newcommand{\ei}{\end{itemize}}
\newcommand{\bi}{\begin{itemize}}
\newcommand{\sgn}{\text{sgn}}
\let\oldReturn\Return
\renewcommand{\Return}{\State\oldReturn}
\begin{document}

\title{Unsupervised and Semi-supervised Anomaly Detection with LSTM Neural Networks}

\author{Tolga Ergen, Ali H. Mirza, and Suleyman S. Kozat \textit{Senior Member, IEEE}\thanks{This work is supported in part by Outstanding Researcher Programme Turkish Academy of Sciences and TUBITAK Contract No 117E153.}
\thanks{The authors are with the Department of Electrical and Electronics Engineering, Bilkent University, Bilkent, Ankara 06800, Turkey, Tel: +90 (312) 290-2336, Fax: +90 (312) 290-1223, {(contact e-mail: \{ergen, mirza, kozat\}@ee.bilkent.edu.tr)}.} }

\maketitle
\begin{abstract}
We investigate anomaly detection in an unsupervised framework and introduce Long Short Term Memory (LSTM) neural network based algorithms. In particular, given variable length data sequences, we first pass these sequences through our LSTM based structure and obtain fixed length sequences. We then find a decision function for our anomaly detectors based on the One Class Support Vector Machines (OC-SVM) and Support Vector Data Description (SVDD) algorithms. As the first time in the literature, we jointly train and optimize the parameters of the LSTM architecture and the OC-SVM (or SVDD) algorithm using highly effective gradient and quadratic programming based training methods. To apply the gradient based training method, we modify the original objective criteria of the OC-SVM and SVDD algorithms, where we prove the convergence of the modified objective criteria to the original criteria. We also provide extensions of our unsupervised formulation to the semi-supervised and fully supervised frameworks. Thus, we obtain anomaly detection algorithms that can process variable length data sequences while providing high performance, especially for time series data. Our approach is generic so that we also apply this approach to the Gated Recurrent Unit (GRU) architecture by directly replacing our LSTM based structure with the GRU based structure. In our experiments, we illustrate significant performance gains achieved by our algorithms with respect to the conventional methods. 
\end{abstract}
\begin{keywords}
Anomaly detection, Support Vector Machines, Support Vector Data Description, LSTM, GRU.
\end{keywords}
\section{Introduction}\label{sec:introduction}
\subsection{Preliminaries}
Anomaly detection \cite{anomalydetection} has attracted significant interest in the contemporary learning literature due its applications in a wide range of engineering problems, e.g., sensor failure \cite{anomaly_sensor}, network monitoring \cite{anomaly_network}, cybersecurity \cite{anomaly_cyber} and surveillance \cite{anomaly_surveillance}. In this paper, we study the variable length anomaly detection problem in an unsupervised framework, where we seek to find a function to decide whether each unlabeled variable length sequence in a given dataset is anomalous or not. Note that although this problem is extensively studied in the literature and there exist different methods, e.g., supervised (or semi-supervised) methods, that require the knowledge of data labels, we employ an unsupervised method due to the high cost of obtaining accurate labels in most real life applications \cite{anomalydetection} such as in cybersecurity \cite{anomaly_cyber} and surveillance \cite{anomaly_surveillance}. However, we also extend our derivations to the semi-supervised and fully supervised frameworks for completeness.

In the current literature, a common and widely used approach for anomaly detection is to find a decision function that defines the model of normality \cite{anomalydetection}. In this approach, one first defines a certain decision function and then optimizes the parameters of this function with respect to a predefined objective criterion, e.g., the  One Class Support Vector Machines (OC-SVM) and Support Vector Data Description (SVDD) algorithms \cite{svm1,svdd}. However, algorithms based on this approach examine time series data over a sufficiently long time window to achieve an acceptable performance \cite{anomalydetection,anomaly_timeseries1, anomaly_timeseries2}. Thus, their performances significantly depend on the length of this time window so that this approach requires careful selection for the length of time window to provide a satisfactory performance \cite{anomaly_lstm,anomaly_timeseries1}. To enhance performance for time series data, neural networks, especially Recurrent Neural Networks (RNNs), based approaches are introduced thanks to their inherent memory structure that can store ``time" or ``state" information \cite{anomalydetection,anomaly_rnn}. However, since the basic RNN architecture does not have control structures (gates) to regulate the amount of information to be stored \cite{rnndisadvantage,lstm}, a more advanced RNN architecture with several control structures, i.e., the Long Short Term Memory (LSTM) network, is introduced \cite{hoch_lstm,lstm}. However, neural networks based approaches do not directly optimize an objective criterion for anomaly detection \cite{anomaly_neural,anomalydetection}. Instead, they first predict a sequence from its past samples and then determine whether the sequence is an anomaly or not based on the prediction error, i.e., an anomaly is an event, which cannot be predicted from the past nominal data \cite{anomalydetection}. Thus, they require a probabilistic model for the prediction error and a threshold on the probabilistic model to detect anomalies, which results in challenging optimization problems and restricts their performance accordingly \cite{anomalydetection,anomaly_neural,anomaly_neural2}. Furthermore, both the common and neural networks based approaches can process only fixed length vector sequences, which significantly limits their usage in real life applications \cite{anomalydetection}.

In order to circumvent these issues, we introduce novel LSTM based anomaly detection algorithms for variable length data sequences. In particular, we first pass variable length data sequences through an LSTM based structure to obtain fixed length representations. We then apply our OC-SVM \cite{svm1} and SVDD \cite{svdd} based algorithms for detecting anomalies in the extracted fixed length vectors as illustrated in Fig. \ref{overallstructure}. Unlike the previous approaches in the literature \cite{anomalydetection}, we jointly train the parameters of the LSTM architecture and the OC-SVM (or SVDD) formulation to maximize the detection performance. For this joint optimization, we propose two different training methods, i.e., a quadratic programming based and a gradient based algorithms, where the merits of each different approach are detailed in the paper. For our gradient based training method, we modify the original OC-SVM and SVDD formulations and then provide the convergence results of the modified formulations to the original ones. Thus, instead of following the prediction based approaches \cite{anomalydetection,anomaly_neural,anomaly_neural2} in the current literature, we define proper objective functions for anomaly detection using the LSTM architecture and optimize the parameters of the LSTM architecture via these well defined objective functions. Hence, our anomaly detection algorithms are able to process variable length sequences and provide high performance for time series data. Furthermore, since we introduce a generic approach in the sense that it can be applied to any RNN architecture, we also apply our approach to the Gated Recurrent Unit (GRU) architecture \cite{gru}, i.e., an advanced RNN architecture as the LSTM architecture, in our simulations. Through extensive set of experiments, we demonstrate significant performance gains with respect to the conventional methods \cite{svm1,svdd}.
\begin{figure}[t]
 \centering 
 \includegraphics[width=.5\textwidth]{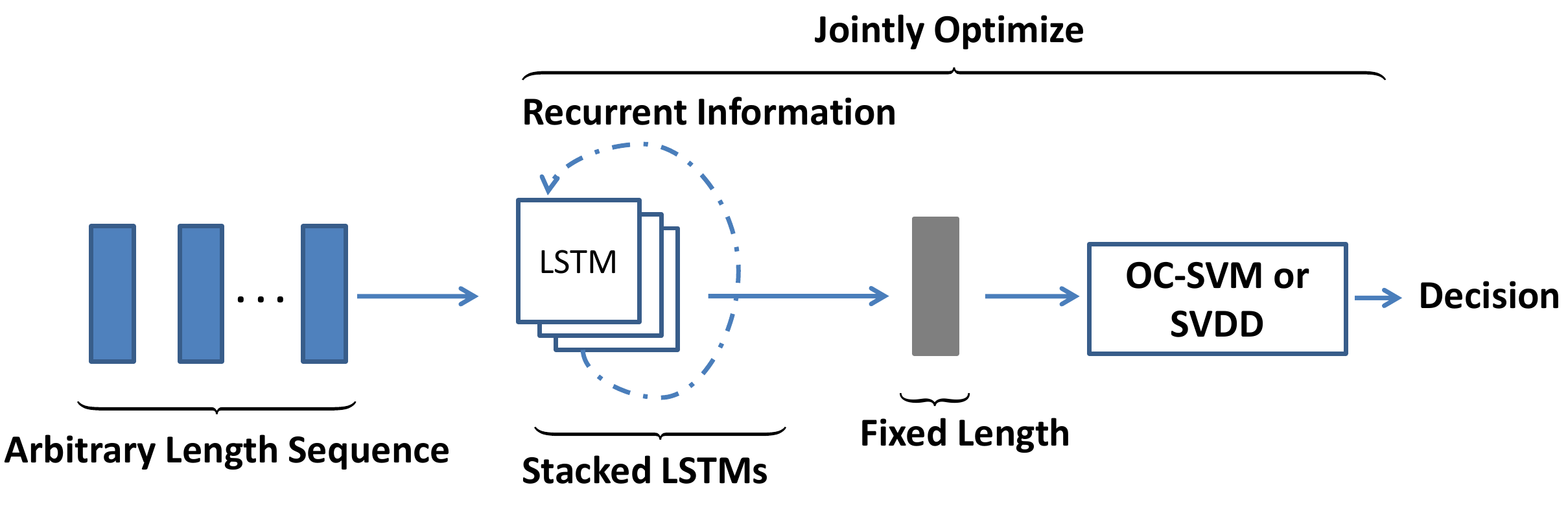}
 \caption{Overall structure of our anomaly detection approach.} \label{overallstructure}
 \end{figure}
\subsection{Prior Art and Comparisons}
Several different methods have been introduced for the anomaly detection problem \cite{anomalydetection}. Among these methods, the OC-SVM \cite{svm1} and SVDD \cite{svdd} algorithms are generally employed due their high performance in real life applications \cite{anomaly_svm_generic}. However, these algorithms provide inadequate performance for time series data due to their inability to capture time dependencies \cite{anomaly_timeseries1,anomaly_timeseries2 }. In order to improve the performances of these algorithms for time series data, in \cite{anomaly_timeseries2}, the authors convert time series data into a set of vectors by replicating each sample so that they obtain two dimensional vector sequences. However, even though they obtain two dimensional vector sequences, the second dimension does not provide additional information such that this approach still provides inadequate performance for time series data \cite{anomaly_timeseries1}. As another approach, the OC-SVM based method in \cite{anomaly_timeseries1} acquires a set of vectors from time series data by unfolding the data into a phase space using a time delay embedding process \cite{timedelay}. More specifically, for a certain sample, they create an $E$ dimensional vector by using the previous $E-1$ samples along with the sample itself \cite{anomaly_timeseries1}. However, in order to obtain a satisfactory performance from this approach, the dimensionality, i.e., $E$, should be carefully tuned, which restricts its usage in real life applications \cite{anomaly_timeseries3}. On the other hand, even though LSTM based algorithms provide high performance for time series data, we have to solve highly complex optimization problems to get an adequate performance \cite{anomalydetection}. As an example, the LSTM based anomaly detection algorithms in \cite{anomaly_lstm,anomaly_lstm_ecg} first predict time series data and then fit a multivariate Gaussian distribution to the error, where they also select a threshold for this distribution. Here, they allocate different set of sequences to learn the parameters of the distribution and threshold via the maximum likelihood estimation technique \cite{anomaly_lstm,anomaly_lstm_ecg}. Thus, the conventional LSTM based approaches require careful selection of several additional parameters, which significantly degrades their performance in real life \cite{anomalydetection,anomaly_lstm}. Furthermore, both the OC-SVM (or SVDD) and LSTM based methods are able to process only fixed length sequences \cite{svm1,svdd,anomaly_lstm}. To circumvent these issues, we introduce generic LSTM based anomaly detectors for variable length data sequences, where we jointly train the parameters of the LSTM architecture and the OC-SVM (or SVDD) formulation via a predefined objective function. Therefore, we not only obtain high performance for time series data but also enjoy joint and effective optimization of the parameters with respect to a well defined objective function.

\subsection{Contributions}
Our main contributions are as follows:
\begin{itemize}
\item We introduce LSTM based anomaly detection algorithms in an unsupervised framework, where we also extend our derivations to the semi-supervised and fully supervised frameworks. 
\item As the first time in the literature, we jointly train the parameters of the LSTM architecture and the OC-SVM (or SVDD) formulation via a well defined objective function, where we introduce two different joint optimization methods. For our gradient based joint optimization method, we modify the OC-SVM and SVDD formulations and then prove the convergence of the modified formulations to the original ones.
\item Thanks to our LSTM based structure, the introduced methods are able to process variable length data sequences. Additionally, unlike the conventional methods \cite{svm1,svdd}, our methods effectively detect anomalies in time series data without requiring any preprocessing.
\item Through extensive set of experiments involving real and simulated data, we illustrate significant performance improvements achieved by our algorithms with respect to the conventional methods \cite{svm1,svdd}. Moreover, since our approach is generic, we also apply it to the recently proposed GRU architecture \cite{gru} in our experiments.
\end{itemize}
\subsection{Organization of the Paper}
The organization of this paper is as follows. In Section \ref{sec:problem_description}, we first describe the variable length anomaly detection problem and then introduce our LSTM based structure. In Section \ref{sec:sub_svm}, we introduce anomaly detection algorithms based on the OC-SVM formulation, where we also propose two different joint training methods in order to learn the LSTM and SVM parameters. The merits of each different approach are also detailed in the same section. In a similar manner, we introduce anomaly detection algorithms based on the SVDD formulation and provide two different joint training methods to learn the parameters in Section \ref{sec:sub_svdd}. In Section \ref{sec:simulations}, we demonstrate performance improvements over several real life datasets. In the same section, thanks to our generic approach, we also introduce GRU based anomaly detection algorithms. Finally, we provide concluding remarks in Section \ref{sec:conclusion}.

\section{Model and Problem Description} \label{sec:problem_description}
In this paper, all vectors are column vectors and denoted by boldface
lower case letters. Matrices are represented by boldface uppercase
letters. For a vector $\vec{a}$, $\vec{a}^T$ is its
ordinary transpose and $\norm{\vec{a}} =
\sqrt{\vec{a}^T\vec{a}}$ is the $\ell^2$-norm. The time index is given as subscript, e.g., $\vec{a}_i$ is the $i$\textsuperscript{th} vector. Here, $\vec{1}$ (and $\vec{0}$) is a vector of all ones (and zeros) and $\vec{I}$ represents the identity matrix, where the sizes are understood from the context.

We observe data sequences $\lbrace \vec{X}_i \rbrace_{i=1}^n$, i.e., defined as
\begin{align*}
\vec{X}_{i}=[\vec{x}_{i,1} \text{ } \vec{x}_{i,2}\ldots \vec{x}_{i,d_i} ],
\end{align*}
where $\vec{x}_{i,j} \in \mathbb{R}^p$, $\forall j \in \lbrace 1,2, \ldots d_i\rbrace$ and $d_i \in \mathbb{Z}^{+}$ is the number of columns in $\vec{X}_i$, which can vary with respect to $i$. Here, we assume that the bulk of the observed sequences are normal and the remaining sequences are anomalous. Our aim is to find a scoring (or decision) function to determine whether $\vec{X}_i$ is anomalous or not based on the observed data, where $+1$ and $-1$ represent the outputs of the desired scoring function for nominal and anomalous data respectively. As an example application for this framework, in host based intrusion detection \cite{anomalydetection}, the system handles operating system call traces, where the data consists of system calls that are generated by users or programs. All traces contain system calls that belong to the same alphabet, however, the co-occurrence of the system calls is the key issue in detecting anomalies \cite{anomalydetection}. For different programs, these system calls are executed in different sequences, where the length of the sequence may vary for each program. Binary encoding of a sample set of call sequences can be $\vec{X}_1=101011$, $\vec{X}_2=1010$ and $\vec{X}_3=1011001$ for $n=3$ case \cite{anomalydetection}. After observing such a set of call sequences, our aim is to find a scoring function that successfully distinguishes the anomalous call sequences from the normal sequences.

In order to find a scoring function $l(\cdot)$ such that
\begin{align*}
l(\vec{X}_i) =
\begin{cases} 
-1 & \text{ if } \vec{X}_i \text{ is anomalous}\\    +1 & \text{ otherwise }
\end{cases},
\end{align*}
one can use the OC-SVM algorithm \cite{svm1} to find a hyperplane that separates the anomalies from the normal data or the SVDD algorithm \cite{svdd} to find a hypersphere enclosing the normal data while leaving the anomalies outside the hypersphere. However, these algorithms can only process fixed length sequences. Hence, we use the LSTM architecture \cite{hoch_lstm} to obtain a fixed length vector representation for each $\vec{X_i}$. Although there exist several different versions of LSTM architecture, we use the most widely employed architecture, i.e., the LSTM architecture without peephole connections \cite{lstm}. We first feed $\vec{X}_i$ to the LSTM architecture as demonstrated in Fig. \ref{pooling}, where the internal LSTM equations are as follows \cite{hoch_lstm}:
\begin{align}
\label{eq:z}
&\vec{z}_{i,j}=g(\vec{W}^{(z)}\vec{x}_{i,j}+\vec{R}^{(z)}\vec{h}_{i,j-1}+\vec{b}^{(z)})\\
  & \vec{s}_{i,j} = \sigma(\vec{W}^{(s)}\vec{x}_{i,j}+\vec{R}^{(s)}\vec{h}_{i,j-1}+\vec{b}^{(s)}) \label{eq:s} \\
    & \vec{f}_{i,j} = \sigma(\vec{W}^{(f)}\vec{x}_{i,j}+\vec{R}^{(f)}\vec{h}_{i,j-1}+\vec{b}^{(f)}) \label{eq:f} \\
  & \vec{c}_{i,j} = \vec{s}_{i,j} \odot \vec{z}_{i,j}+ \vec{f}_{i,j}\odot \vec{c}_{i,j-1} \label{eq:state} \\
  & \vec{o}_{i,j}= \sigma(\vec{W}^{(o)}\vec{x}_{i,j}+\vec{R}^{(o)}\vec{h}_{i,j-1}+ \vec{b}^{(o)}) \label{eq:o} \\
  & \vec{h}_{i,j}= \vec{o}_{i,j}\odot g(\vec{c}_{i,j}), \label{eq:output}
\end{align}
where $\vec{c}_{i,j} \in \mathbb{R}^m$ is the state vector, $\vec{x}_{i,j} \in \mathbb{R}^p$ is the input vector and $\vec{h}_{i,j}\in \mathbb{R}^m$ is the output vector for the $j$\textsuperscript{th} LSTM unit in Fig. \ref{pooling}. Additionally, $\vec{s}_{i,j}$, $\vec{f}_{i,j}$ and $\vec{o}_{i,j}$ is the input, forget and output gates, respectively. Here, $g(\cdot)$ is set to the hyperbolic tangent function, i.e., $\tanh$, and applies to input vectors pointwise. Similarly, $\sigma(\cdot)$ is set to the sigmoid function. $\odot$ is the operation for elementwise multiplication of two same sized vectors. Furthermore, $\vec{W}^{(\cdot)}$, $\vec{R}^{(\cdot)}$ and $\vec{b}^{(\cdot)}$ are the parameters of the LSTM architecture, where the size of each is selected according to the dimensionality of the input and output vectors. After applying the LSTM architecture to each column of our data sequences as illustrated in Fig. \ref{pooling}, we take the average of the LSTM outputs for each data sequence, i.e., the mean pooling method. By this, we obtain a new set of fixed length sequences, i.e., denoted as $\lbrace \vec{\bar{h}}_i \rbrace_{i=1}^n$, $\vec{\bar{h}}_i \in \mathbb{R}^m$. Note that we also use the same procedure to obtain the state information $\vec{\bar{c}}_i \in \mathbb{R}^m$ for each $\vec{X}_i$ as demonstrated in Fig. \ref{pooling}.
\begin{figure}[t]
 \centering 
 \includegraphics[width=.45\textwidth]{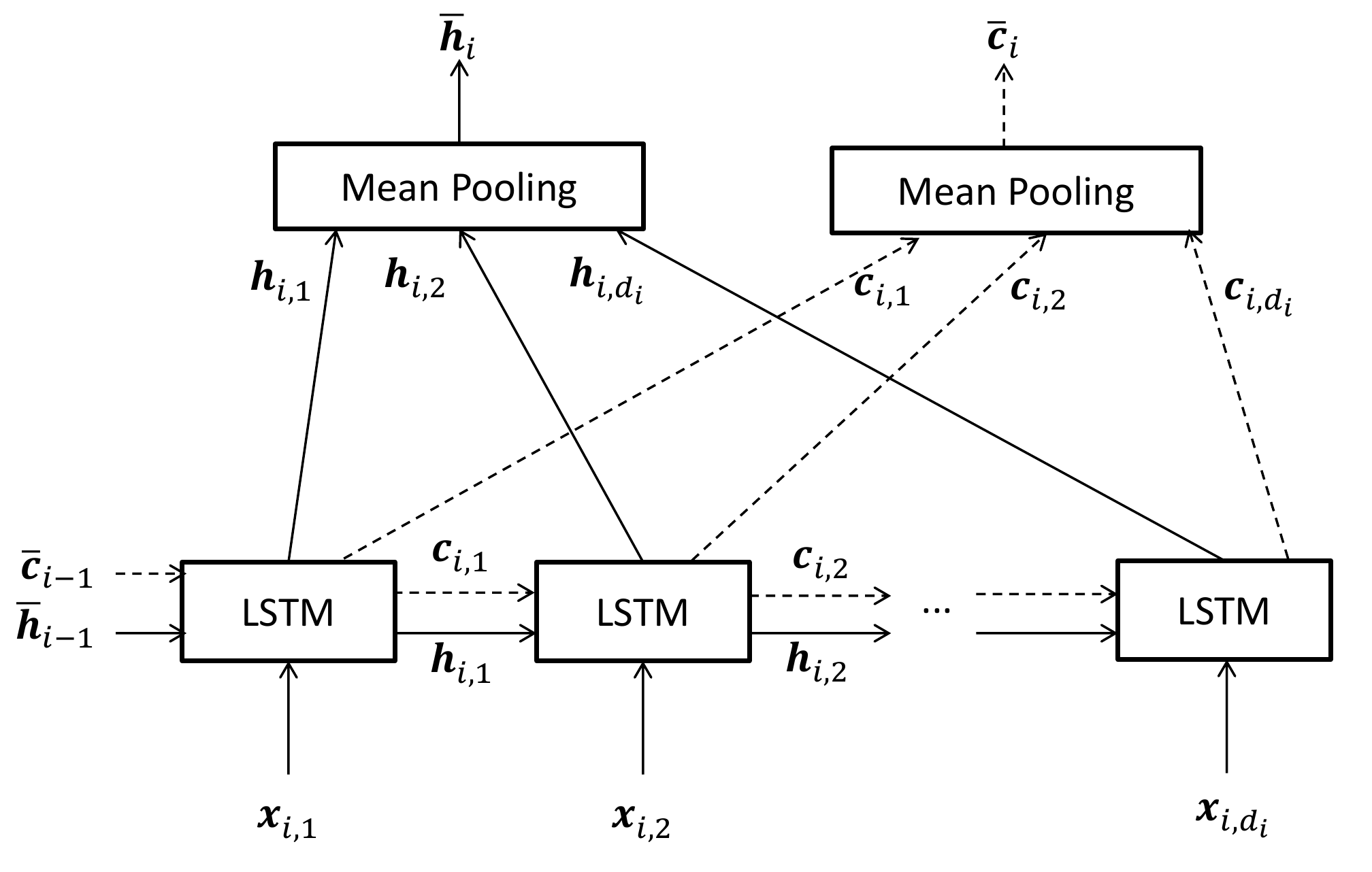}\\
 \caption{Our LSTM based structure for obtaining fixed length sequences.} \label{pooling}
\end{figure}
\begin{remark}
We use the mean pooling method in order to obtain the fixed length sequences as 
$
\vec{\bar{h}}_i= \frac{1}{d_i} \sum_{j=1}^{d_i} \vec{h}_{i,j}.
$
However, we can also use the other pooling methods. As an example, for the last and max pooling methods, we use $\vec{\bar{h}}_{i}=\vec{h}_{i,d_i}$ and $\vec{\bar{h}}_{i}=\max_{j} \vec{h}_{i,j}$, $\forall i \in \lbrace 1, 2, \ldots n \rbrace$, respectively. Our derivations can be straightforwardly extended to these different pooling methods.
\end{remark}

\section{Novel Anomaly Detection Algorithms}\label{sec:anomaly_detection}
In this section, we first formulate the anomaly detection approaches based on the OC-SVM and SVDD algorithms. We then provide joint optimization updates to train the parameters of the overall structure.
\subsection{Anomaly Detection with the OC-SVM Algorithm} \label{sec:sub_svm}
In this subsection, we provide an anomaly detection algorithm based on the OC-SVM formulation and derive the joint updates for both the LSTM and SVM parameters. For the training, we first provide a quadratic programming based algorithm and then introduce a gradient based training algorithm. To apply the gradient based training method, we smoothly approximate the original OC-SVM formulation and then prove the convergence of the approximated formulation to the actual one in the following subsections.

In the OC-SVM algorithm, our aim is to find a hyperplane that separates the anomalies from the normal data \cite{svm1}. We formulate the OC-SVM optimization problem for the sequence $\lbrace \vec{\bar{h}}_i \rbrace_{i=1}^n$ as follows \cite{svm1}
\begin{align}
&\min_{\mbox{\scriptsize\vec{\theta}} \in \mathbb{R}^{n_{\theta}},\mbox{\scriptsize\vec{w}} \in \mathbb{R}^m,\xi \in \mathbb{R}, \rho \in \mathbb{R}}\text{    }\frac{\|\vec{w}\|^{2}}{2}+\frac{1}{n\lambda} \sum_{i=1}^{n} \xi_{i}-\rho \label{svm_primal}\\ 
 &\text{ subject to: } \vec{w}^{T}\vec{\bar{h}}_{i} \geq \rho-\xi_{i} \text{, } \xi_{i}\geq0  \text{, } \forall i \label{svm_primal_constraint}\\
 &\vec{W}^{(\cdot)T}\vec{W}^{(\cdot)}=\vec{I},  \vec{R}^{(\cdot)T}\vec{R}^{(\cdot)}=\vec{I} \text{ and } \vec{b}^{(\cdot)T}\vec{b}^{(\cdot)}=1 \label{svm_primal_constraint2},
\end{align}
where $\rho$ and $\vec{w}$ are the parameters of the separating hyperplane, $\lambda >0$ is a regularization parameter, $\xi$ is a slack variable to penalize misclassified instances and we group the LSTM parameters {\def\OldComma{,}
    \catcode`\,=13
    \def,{%
      \ifmmode%
        \OldComma\discretionary{}{}{}%
      \else%
        \OldComma%
      \fi  }%
    $\lbrace\vec{W}^{(z)}, \vec{R}^{(z)}, \vec{b}^{(z)}, \vec{W}^{(s)}, \vec{R}^{(s)}, \vec{b}^{(s)}, \vec{W}^{(f)}, \vec{R}^{(f)}, \vec{b}^{(f)}, \vec{W}^{(o)}, \vec{R}^{(o)}, \vec{b}^{(o)} \rbrace$} into $\vec{\theta} \in \mathbb{R}^{n_{\theta}}$, where $n_{\theta}=4m(m+p+1)$. Since the LSTM parameters are unknown and $\vec{\bar{h}}_i$ is a function of these parameters, we also minimize the cost function in \eqref{svm_primal} with respect to $\vec{\theta}$.  

After solving the optimization problem in \eqref{svm_primal}, \eqref{svm_primal_constraint} and \eqref{svm_primal_constraint2}, we use the scoring function
\begin{align}
\label{decision_svm}
l(\vec{X}_i) =\sgn(\vec{w}^T\vec{\bar{h}}_i-\rho)
\end{align}
to detect the anomalous data, where the $\sgn(\cdot)$ function returns the sign of its input.

We emphasize that while minimizing \eqref{svm_primal} with respect to $\vec{\theta}$, we might suffer from overfitting and impotent learning of time dependencies on the data \cite{goru}, i.e., forcing the parameters to null values, e.g., $\vec{\theta}=\vec{0}$. To circumvent these issues, we introduce \eqref{svm_primal_constraint2}, which constraints the norm of $\vec{\theta}$ to avoid overfitting and trivial solutions, e.g., $\vec{\theta}=\vec{0}$, while boosting the ability of the LSTM architecture to capture time dependencies \cite{goru,rnn_unitary}. 
\begin{remark}
\label{lstm_constraint}
In \eqref{svm_primal_constraint2}, we use an orthogonality constraint for each LSTM parameter. However, we can also use other constraints instead of \eqref{svm_primal_constraint2} and solve the optimization problem in \eqref{svm_primal}, \eqref{svm_primal_constraint} and \eqref{svm_primal_constraint2} in the same manner. As an example, a common choice of constraint for neural networks is the Frobenius norm \cite{frobenius}, i.e., defined as
\begin{align}
\label{frobenius}
\| \vec{A} \|_F=\sum_{i} \sum_j \vec{A}_{ij}^2
\end{align}
for a real matrix $\vec{A}$, where $\vec{A}_{ij}$ represents the element at the $i$\textsuperscript{th} column and $j$\textsuperscript{th} row of $\vec{A}$. In this case, we can directly replace \eqref{svm_primal_constraint2} with a Frobenius norm constraint for each LSTM parameter as in \eqref{frobenius} and then solve the optimization problem in the same manner. Such approaches only aim to regularize the parameters \cite{rnn_unitary}. However, for RNNs, we may also encounter exponential growth or decay in the norm of the gradients while training the parameters, which significantly degrades capabilities of these architectures to capture time dependencies \cite{goru,rnn_unitary}. Thus, in this paper, we put the constraint \eqref{svm_primal_constraint2} in order to regularize the parameters while improving the capabilities of the LSTM architecture in capturing time dependencies \cite{goru,rnn_unitary}.
\end{remark}
 \subsubsection{Quadratic Programming Based Training Algorithm}
Here, we introduce a training approach based on quadratic programming for the optimization problem in \eqref{svm_primal}, \eqref{svm_primal_constraint} and \eqref{svm_primal_constraint2}, where we perform consecutive updates for the LSTM and SVM parameters. For this purpose, we first convert the optimization problem to a dual form in the following. We then provide the consecutive updates for each parameter.

We have the following Lagrangian for the SVM parameters
\begin{align}
L(\vec{w},\xi,\rho,\nu,\alpha)=&\frac{\|\vec{w}\|^{2}}{2}+\frac{1}{n\lambda} \sum_{i=1}^{n} \xi_{i}-\rho-\sum_{i=1}^{n} \nu_{i}\xi_{i} \nonumber \\
&-\sum_{i=1}^{n} \alpha_{i}(\vec{w}^{T}\vec{\bar{h}}_{i}-\rho+\xi_{i})\label{lagrange_svm_quad},
\end{align}
where $\nu_{i}$, $\alpha_{i} \geq 0$ are the Lagrange multipliers. Taking derivative of \eqref{lagrange_svm_quad} with respect to $\vec{w}$, $\xi$ and $\rho$ and then setting the derivatives to zero gives
\begin{align}
&\vec{w}=\sum_{i=1}^{n}\alpha_{i}\vec{\bar{h}}_{i} \label{wlagrange_quad}\\
&\sum_{i=1}^{n}\alpha_{i}=1 \text{ and } \alpha_{i}=1/(n\lambda)-\nu_{i} \text{, } \forall i. \label{lagrange_quad}
\end{align}
Note that at the optimum, the inequalities in \eqref{svm_primal_constraint} become equalities if $\alpha_i$ and $\nu_i$ are nonzero, i.e.,  $0<\alpha_i <1/(n\lambda)$ \cite{svm1}. With this relation, we compute $\rho$ as
\begin{align}
\rho=\sum_{j=1}^{n}\alpha_{j}\vec{\bar{h}}_j^T\vec{\bar{h}}_i \text{ for } 0<\alpha_i<1/(n\lambda). \label{plagrange_quad}
\end{align}

By substituting \eqref{wlagrange_quad} and \eqref{lagrange_quad} into \eqref{lagrange_svm_quad}, we obtain the following dual problem for the constrained minimization in \eqref{svm_primal}, \eqref{svm_primal_constraint} and \eqref{svm_primal_constraint2}
\begin{align}
&\min_{\mbox{\scriptsize\vec{\theta}} \in \mathbb{R}^{n_{\theta}},\mbox{\scriptsize\vec{\alpha}} \in \mathbb{R}^n}\frac{1}{2}\sum_{i=1}^{n} \sum_{j=1}^{n} \alpha_{i}\alpha_{j} \vec{\bar{h}}_i^T \vec{\bar{h}}_j \label{svm_dual}\\
&\text{subject to: } \sum_{i=1}^{n}\alpha_{i}=1 \text{ and }\ 0\leq \alpha_{i} \leq 1/(n\lambda) \text{, }\forall i \label{svm_dual_constraint}\\
 &\vec{W}^{(\cdot)T}\vec{W}^{(\cdot)}=\vec{I},  \vec{R}^{(\cdot)T}\vec{R}^{(\cdot)}=\vec{I} \text{ and } \vec{b}^{(\cdot)T}\vec{b}^{(\cdot)}=1, \label{svm_dual_constraint2}
\end{align} 
where $\vec{\alpha}\in \mathbb{R}^n$ is a vector representation for $\alpha_i$'s. Since the LSTM parameters are unknown, we also put the minimization term for $\vec{\theta}$ into \eqref{svm_dual} as in \eqref{svm_primal}.
By substituting \eqref{wlagrange_quad} into \eqref{decision_svm}, we have the following scoring function for the dual problem
\begin{align}
\label{svm_dual_decision}
l(\vec{X}_i)=\sgn \big(\sum_{j=1}^{n}\alpha_{j}\vec{\bar{h}}_j^T\vec{\bar{h}}_i-\rho \big) ,
\end{align}
where we calculate $\rho$ using \eqref{plagrange_quad}.

In order to find the optimal $\vec{\theta}$ and $\vec{\alpha}$ for the optimization problem in \eqref{svm_dual}, \eqref{svm_dual_constraint} and \eqref{svm_dual_constraint2}, we employ the following procedure. We first select a certain set of the LSTM parameters, i.e., $\vec{\theta}_0$. Based on $\vec{\theta}_0$, we find the minimizing $\vec{\alpha}$ values, i.e., $\vec{\alpha}_1$, using the Sequential Minimal Optimization (SMO) algorithm \cite{smo}. Now, we fix $\vec{\alpha}$ as $\vec{\alpha}_1$ and then update $\vec{\theta}$ from $\vec{\theta}_0$ to $\vec{\theta}_1$ using the algorithm for optimization with orthogonality constraints in \cite{feasiblemethod}. We repeat these consecutive update procedures until $\vec{\alpha}$ and $\vec{\theta}$ converge \cite{remez}. Then, we use the converged values in order to evaluate \eqref{svm_dual_decision}. In the following, we explain the update procedures for $\vec{\theta}$ and $\vec{\alpha}$ in detail.

Based on $\vec{\theta}_k$, i.e., the LSTM parameter vector at the $k$\textsuperscript{th} iteration, we update $\vec{\alpha}_k$, i.e., the $\vec{\alpha}$ vector at the $k$\textsuperscript{th} iteration, using the SMO algorithm due to its efficiency in solving quadratic constrained optimization problems \cite{smo}. In the SMO algorithm, we choose a subset of parameters to minimize and fix the rest of parameters. In the extreme case, we choose only one parameter to minimize, however, due to \eqref{svm_dual_constraint}, we must choose at least two parameters. To illustrate how the SMO algorithm works in our case, we choose $\alpha_{1}$ and $\alpha_{2}$ to update and fix the rest of the parameters in \eqref{svm_dual}. From \eqref{svm_dual_constraint}, we have
\begin{align}
\alpha_{1}=1-S-\alpha_{2}, \text{ where } S=\sum_{i=3}^{n}\alpha_{i}. \label{smo1}
\end{align} 
We first replace $\alpha_{1}$ in \eqref{svm_dual} with \eqref{smo1}. We then take the derivative of \eqref{svm_dual} with respect to $\alpha_{2}$ and equate the derivative to zero. Thus, we obtain the following update for $\alpha_{2}$ at the $k$\textsuperscript{th} iteration
\begin{align}
\alpha_{k+1,2}=\frac{(\alpha_{k,1}+\alpha_{k,2})(K_{11}-K_{12})+M_1 - M_2}{K_{11}+K_{22}-2K_{12}},\label{smo2}
\end{align}
where $K_{ij} \triangleq\vec{\bar{h}}_i^T \vec{\bar{h}}_j$, $M_i\triangleq\sum_{j=3}^n \alpha_{k,j} K_{ij}$ and $\alpha_{k,i}$ represents the $i$\textsuperscript{th} element of $\vec{\alpha}_k$. Due to \eqref{svm_dual_constraint}, if the updated value of $\alpha_2$ is outside of the region $[0,1/(n\lambda)]$, we project it to this region. Once $\alpha_2$ is updated as $\alpha_{k+1,2}$, we obtain $\alpha_{k+1,1}$ using \eqref{smo1}. For the rest of the parameters, we repeat the same procedure, which eventually converges to a certain set of parameters \cite{smo}. By this way, we obtain $\vec{\alpha}_{k+1}$, i.e., the minimizing $\vec{\alpha}$ for $\vec{\theta}_k$.

Following the update of $\vec{\alpha}$, we update $\vec{\theta}$ based on the updated $\vec{\alpha}_{k+1}$ vector. For this purpose, we employ the optimization method in \cite{feasiblemethod}. Since we have $\vec{\alpha}_{k+1}$ that satisfies \eqref{svm_dual_constraint}, we reduce the dual problem to 
\begin{align}
\label{svm_dual_thetamin}
&\min_{\mbox{\scriptsize\vec{\theta}}} \text{ }\kappa(\vec{\theta},\vec{\alpha}_{k+1})= \frac{1}{2}\sum_{i=1}^{n} \sum_{j=1}^{n} \alpha_{k+1,i} \alpha_{k+1,j}  \vec{\bar{h}}_i^T \vec{\bar{h}}_j\\
 & \text{s.t.:}\vec{W}^{(\cdot)T}\vec{W}^{(\cdot)}=\vec{I},  \vec{R}^{(\cdot)T}\vec{R}^{(\cdot)}=\vec{I} \text{ and } \vec{b}^{(\cdot)T}\vec{b}^{(\cdot)}=1.\label{svm_dual_thetamincons}
\end{align}
For \eqref{svm_dual_thetamin} and \eqref{svm_dual_thetamincons}, we update $\vec{W}^{(\cdot)}$ as follows
\begin{align}
\label{svm_dual_Wupdate}
\vec{W}^{(\cdot)}_{k+1}= \bigg( \vec{I}+\frac{\mu}{2} \vec{A}_k \bigg)^{-1} \bigg( \vec{I}-\frac{\mu}{2} \vec{A}_k \bigg) \vec{W}^{(\cdot)}_{k},
\end{align}
where the subscripts represent the current iteration index, $\mu$ is the learning rate, $\vec{A}_k=\vec{G}_k (\vec{W}^{(\cdot)}_k)^{T}-\vec{W}^{(\cdot)}_k\vec{G}_k^T$ and the element at the $i$\textsuperscript{th} row and the $j$\textsuperscript{th} column of $\vec{G}$, i.e., $\vec{G}_{ij}$, is defined as
\begin{align}
\label{derivative_W}\vec{G}_{ij}\triangleq \frac{\partial \kappa(\vec{\theta},\vec{\alpha}_{k+1})}{\partial \vec{W}^{(\cdot)}_{ij}}.
\end{align}
\begin{remark} 
\label{remark2}
For $\vec{R}^{(\cdot)}$ and $\vec{b}^{(\cdot)}$, we first compute the gradient of the objective function with respect to the chosen parameter as in \eqref{derivative_W}. We then obtain $\vec{A}_k$ according to the chosen parameter. Using $\vec{A}_k$, we update the chosen parameter as in \eqref{svm_dual_Wupdate}.
\end{remark}
With these updates, we obtain a quadratic programming based training algorithm (see Algorithm \ref{alg1} for the pseudocode) for our LSTM based anomaly detector.
\begin{algorithm}
\caption{Quadratic Programming Based Training for the Anomaly Detection Algorithm Based on OC-SVM}
\label{alg1}
\begin{algorithmic}[1]
\State{Initialize the LSTM parameters as $\vec{\theta}_0$ and the dual OC-SVM parameters as $\vec{\alpha}_0$}
\State{Determine a threshold $\epsilon$ as convergence criterion}
\State{$k=-1$}
\Do
\State{$k=k+1$}
\State{Using $\vec{\theta}_k$, obtain $\lbrace \vec{\bar{h}}\rbrace_{i=1}^n$ according to Fig. \ref{pooling}}
\State{Find optimal $\vec{\alpha}_{k+1}$ for $\lbrace \vec{\bar{h}}\rbrace_{i=1}^n$ using \eqref{smo1} and \eqref{smo2}}
\State{Based on $\vec{\alpha}_{k+1}$, obtain $\vec{\theta}_{k+1} $ using \eqref{svm_dual_Wupdate} and Remark \ref{remark2}}
\doWhile{$\big( \kappa(\vec{\theta}_{k+1},\vec{\alpha}_{k+1}) -\kappa(\vec{\theta}_k,\vec{\alpha}_{k})\big)^2 >\epsilon$}
\State{Detect anomalies using \eqref{svm_dual_decision} evaluated at $\vec{\theta}_k$ and $\vec{\alpha}_k$ }
\end{algorithmic}
\end{algorithm}
\subsubsection{Gradient Based Training Algorithm}
Although the quadratic programming based training algorithm directly optimizes the original OC-SVM formulation without requiring any approximation, since it depends on the separated consecutive updates of the LSTM and OC-SVM parameters, it might not converge to even a local minimum \cite{remez}. In order to resolve this issue, in this subsection, we introduce a training method based on only the first order gradients, which updates the parameters at the same time. However, since we require an approximation to the original OC-SVM formulation to apply this method, we also prove the convergence of the approximated formulation to the original OC-SVM formulation in this subsection.

Considering \eqref{svm_primal_constraint}, we write the slack variable in a different form as follows
 \begin{align}
\label{hinge_svm}
G(\beta_{\mbox{\scriptsize\vec{w}},\rho}(\vec{\bar{h}}_i))\triangleq\max\lbrace0,\beta_{\mbox{\scriptsize\vec{w}},\rho}(\vec{\bar{h}}_i) \rbrace, \forall i,
\end{align}
where
\begin{align*}
\beta_{\mbox{\scriptsize\vec{w}},\rho}(\vec{\bar{h}}_i) \triangleq \rho-\vec{w}^{T}\vec{\bar{h}}_{i}.
\end{align*}
By substituting \eqref{hinge_svm} into \eqref{svm_primal}, we remove the constraint \eqref{svm_primal_constraint} and obtain the following optimization problem
\begin{align}
\label{svm_unconstrained}
&\min_{\mbox{\scriptsize\vec{w}}\in \mathbb{R}^m, \rho \in \mathbb{R}, \mbox{\scriptsize\vec{\theta}}\in \mathbb{R}^{n_{\theta}}}\text{   }\frac{\|\vec{w}\|^{2}}{2}+\frac{1}{n \lambda} \sum_{i=1}^{n} G(\beta_{\mbox{\scriptsize\vec{w}},\rho}(\vec{\bar{h}}_i))- \rho\\ \label{svm_unconstrained_constrained}
& \text{s.t.:}\vec{W}^{(\cdot)T}\vec{W}^{(\cdot)}=\vec{I},  \vec{R}^{(\cdot)T}\vec{R}^{(\cdot)}=\vec{I} \text{ and } \vec{b}^{(\cdot)T}\vec{b}^{(\cdot)}=1.
\end{align}
Since \eqref{hinge_svm} is not a differentiable function, we are unable to solve the optimization problem in \eqref{svm_unconstrained} using gradient based optimization algorithms. Hence, we employ a differentiable function
\begin{align}
\label{hinge_approx}
S_{\tau}(\beta_{\mbox{\scriptsize\vec{w}},\rho}(\vec{\bar{h}}_i)  )=\frac{1}{\tau}\log\bigg(1+e^{\tau \beta_{\mbox{\scriptsize\vec{w}},\rho}(\vec{\bar{\mbox{\scriptsize h}}}_i) }\bigg)
\end{align}
to smoothly approximate \eqref{hinge_svm}, where $\tau>0$ is the smoothing parameter and $\log$ represents the natural logarithm. In \eqref{hinge_approx}, as $\tau$ increases, $S_{\tau}(\cdot)$ converges to $G(\cdot)$ (see Proposition \ref{proposition1} at the end of this section), hence, we choose a large value for $\tau$. With \eqref{hinge_approx}, we modify our optimization problem as follows
\begin{align}
\label{svm_unconstrained_approx}
&\min_{\mbox{\scriptsize\vec{w}}\in \mathbb{R}^m, \rho \in \mathbb{R}, \mbox{\scriptsize\vec{\theta}}\in \mathbb{R}^{n_{\theta}}}\text{}F_{\tau}(\vec{w},\rho,\vec{\theta})\\
 \label{svm_unconstrained_approx_constrained}
& \text{s.t.:}\vec{W}^{(\cdot)T}\vec{W}^{(\cdot)}=\vec{I},  \vec{R}^{(\cdot)T}\vec{R}^{(\cdot)}=\vec{I} \text{ and } \vec{b}^{(\cdot)T}\vec{b}^{(\cdot)}=1
\end{align}
where $F_{\tau}(\cdot,\cdot, \cdot)$ is the objective function of our optimization problem and defined as 
\begin{align*}
F_{\tau}(\vec{w},\rho,\vec{\theta})\triangleq\frac{\|\vec{w}\|^{2}}{2}+\frac{1}{n \lambda} \sum_{i=1}^{n} S_{\tau}(\beta_{\mbox{\scriptsize\vec{w}},\rho}(\vec{\bar{h}}_i)  )-\rho.
\end{align*}
To obtain the optimal parameters for \eqref{svm_unconstrained_approx} and \eqref{svm_unconstrained_approx_constrained}, we update $\vec{w}$, $\rho$ and $\vec{\theta}$ until they converge to a local or global optimum \cite{sayed,feasiblemethod}. For the update of $\vec{w}$ and $\rho$, we use the SGD algorithm \cite{sayed}, where we compute the first order gradient of the objective function with respect to each parameter. We first compute the gradient for $\vec{w}$ as follows
\begin{align}
\label{gradient_w}
\nabla_{\mbox{\scriptsize \vec{w}}}F_{\tau}(\vec{w},\rho,\vec{\theta})=\vec{w}+\frac{1}{n\lambda }\sum_{i=1}^n\frac{-\vec{\bar{h}}_ie^{\tau \beta_{\mbox{\scriptsize\vec{w}},\rho}(\vec{\bar{\mbox{\scriptsize h}}}_i) }}{1+e^{\tau \beta_{\mbox{\scriptsize\vec{w}},\rho}(\vec{\bar{\mbox{\scriptsize h}}}_i) }}.
\end{align}
Using \eqref{gradient_w}, we update $\vec{w}$ as 
\begin{align}
\label{update_w}
\vec{w}_{k+1}=\vec{w}_{k}-\mu\nabla_{\mbox{\scriptsize \vec{w}}}F_{\tau}(\vec{w},\rho,\vec{\theta})\Bigr|_{\scriptsize \substack{  \vec{w}=\vec{w}_{k} \\ \scalebox{1.1}{$\rho$}=\scalebox{1.1}{$\rho_{k}$} 
\\  \vec{\theta}=\vec{\theta}_{k}}},
\end{align}
where the subscript $k$ indicates the value of any parameter at the $k$\textsuperscript{th} iteration. Similarly, we calculate the derivative of the objective function with respect to $\rho$ as follows
\begin{align}
\label{gradient_rho}
\frac{\partial F_{\tau}(\vec{w},\rho,\vec{\theta})}{\partial \rho}=\frac{1}{n\lambda }\sum_{i=1}^n\frac{ e^{\tau \beta_{\mbox{\scriptsize\vec{w}},\rho}(\vec{\bar{\mbox{\scriptsize h}}}_i) }}{1+e^{\tau \beta_{\mbox{\scriptsize\vec{w}},\rho}(\vec{\bar{\mbox{\scriptsize h}}}_i) }}-1.
\end{align}
Using \eqref{gradient_rho}, we update $\rho$ as 
\begin{align}
\label{update_rho}
\rho_{k+1}=\rho_{k}-\mu\frac{\partial F_{\tau}(\vec{w},\rho,\vec{\theta})}{\partial \rho}\Bigr|_{\scriptsize \substack{  \vec{w}=\vec{w}_{k} \\ \scalebox{1.1}{$\rho$}=\scalebox{1.1}{$\rho_{k}$} 
\\  \vec{\theta}=\vec{\theta}_{k}}}.
\end{align}
For the LSTM parameters, we use the method for optimization with orthogonality constraints in \cite{feasiblemethod} due to \eqref{svm_unconstrained_approx_constrained}. To update each element of $\vec{W}^{(\cdot)}$, we calculate the gradient of the objective function as
\begin{align}
\label{svm_gradient_W}
\frac{\partial F_{\tau}(\vec{w},\rho,\vec{\theta})}{\partial \vec{W}^{(\cdot)}_{ij}}=\frac{1}{n\lambda }\sum_{i=1}^n\frac{- \vec{w}^T \big(\partial \vec{\bar{h}}_i/\partial \vec{W}^{(\cdot)}_{ij}\big)  e^{\tau \beta_{\mbox{\scriptsize\vec{w}},\rho}(\vec{\bar{\mbox{\scriptsize h}}}_i) }}{1+e^{\tau \beta_{\mbox{\scriptsize\vec{w}},\rho}(\vec{\bar{\mbox{\scriptsize h}}}_i) }}.
\end{align}
We then update $\vec{W}^{(\cdot)}$ using \eqref{svm_gradient_W} as
\begin{align}
\label{svm_gradient_Wupdate}
\vec{W}_{k+1}^{(\cdot)}= \bigg( \vec{I}+\frac{\mu}{2} \vec{B}_{k} \bigg)^{-1} \bigg( \vec{I}-\frac{\mu}{2} \vec{B}_{k} \bigg)\vec{W}_{k}^{(\cdot)},
\end{align}
where $\vec{B}_k=\vec{M}_k (\vec{W}^{(\cdot)}_k)^{T}-\vec{W}^{(\cdot)}_k\vec{M}_k^T$ and 
\begin{align}
\label{svm_gradient_Wgradient}
\vec{M}_{ij} \triangleq\frac{ \partial F_{\tau}(\vec{w},\rho,\vec{\theta})}{\partial \vec{W}^{(\cdot)}_{ij}}.
\end{align}
\begin{remark}
\label{remark_svm_gradient}
For $\vec{R}^{(\cdot)}$ and $\vec{b}^{(\cdot)}$, we first compute the gradient of the objective function with respect to the chosen parameter as in \eqref{svm_gradient_Wgradient}. We then obtain $\vec{B}_k$ according to the chosen parameter. Using $\vec{B}_k$, we update the chosen parameter as in \eqref{svm_gradient_Wupdate}.
\end{remark}
\begin{remark}
\label{svm_semi}
In the semi-supervised framework, we have the following optimization problem for our SVM based algorithms \cite{svm_semisupervised}
\begin{align}
&\min_{\mbox{\scriptsize\vec{\theta}} ,\mbox{\scriptsize\vec{w}},\xi,\eta, \gamma, \rho }\text{    } \bigg( \frac{\sum_{i=1}^{l} \eta_{i} +\sum_{j=l+1}^{l+k} \min(\gamma_{j},\xi_j)}{(1/C)}  \bigg)+\|\vec{w}\| \label{semi_svm_primal}\\ 
 &\text{ s.t.:} y_i(\vec{w}^{T}\vec{\bar{h}}_{i}+\rho) \geq 1-\eta_{i} \text{, } \eta_{i}\geq0  \text{, } i=1,\ldots ,l \label{semi_svm_primal_constraint}\\
 & \hspace{.4cm}\vec{w}^{T}\vec{\bar{h}}_{j}-\rho \geq 1-\xi_{j} \text{, } \xi_{j}\geq0  \text{, } j=l+1,\ldots ,l+k  \label{semi_svm_primal_constraint2}\\
 & -\vec{w}^{T}\vec{\bar{h}}_{j}+\rho \geq 1-\gamma_{j} \text{, } \gamma_{j}\geq0  \text{, } j=l+1,\ldots ,l+k  \label{semi_svm_primal_constraint3}\\&\vec{W}^{(\cdot)T}\vec{W}^{(\cdot)}=\vec{I},  \vec{R}^{(\cdot)T}\vec{R}^{(\cdot)}=\vec{I} \text{ and } \vec{b}^{(\cdot)T}\vec{b}^{(\cdot)}=1, \label{semi_svm_primal_constraint4}
\end{align}
where $\gamma \in \mathbb{R}$ and $\eta \in \mathbb{R}$ are slack variables as $\xi$, $C$ is a trade-off parameter, $l$ and $k$ are the number of the labeled and unlabeled data instances, respectively and $y_i \in \lbrace -1,+1\rbrace$ represents the label of the $i$\textsuperscript{th} data instance.

For the quadratic programming based training method, we modify all the steps from \eqref{lagrange_svm_quad} to \eqref{derivative_W} with respect to \cref{semi_svm_primal,semi_svm_primal_constraint,semi_svm_primal_constraint2,semi_svm_primal_constraint3,semi_svm_primal_constraint4}. In a similar manner, we modify the equations from \eqref{hinge_svm} to \eqref{svm_gradient_Wgradient} according to \cref{semi_svm_primal,semi_svm_primal_constraint,semi_svm_primal_constraint2,semi_svm_primal_constraint3,semi_svm_primal_constraint4} in order to get the gradient based training method in the semi-supervised framework. For the supervised implementation, we follow the same procedure with the semi-supervised implementation for $k=0$ case.
\end{remark}
Hence, we complete the required updates for each parameter. The complete algorithm is also provided in Algorithm \ref{alg2} as a pseudocode. Moreover, we illustrate the convergence of our approximation \eqref{hinge_approx} to \eqref{hinge_svm} in Proposition \ref{proposition1}. Using Proposition \ref{proposition1}, we then demonstrate the convergence of the optimal values for our objective function \eqref{svm_unconstrained_approx} to the optimal values of the actual SVM objective function \eqref{svm_unconstrained} in Theorem \ref{theorem1}.
\begin{proposition}\label{proposition1}
As $\tau$ increases, $S_{\tau}(\beta_{\mbox{\scriptsize\vec{w}},\rho}(\vec{\bar{h}}_i)  )$ uniformly converges to $G(\beta_{\mbox{\scriptsize\vec{w}},\rho}(\vec{\bar{h}}_i)  )$. As a consequence, our approximation $F_{\tau}(\vec{w},\rho,\vec{\theta})$ converges to the SVM objective function $F(\vec{w},\rho,\vec{\theta})$, i.e., defined as
\begin{align*}
F(\vec{w},\rho,\vec{\theta})\triangleq\frac{\|\vec{w}\|^{2}}{2}+\frac{1}{n \lambda} \sum_{i=1}^{n} G(\beta_{\mbox{\scriptsize\vec{w}},\rho}(\vec{\bar{h}}_i))- \rho.
\end{align*}
\end{proposition}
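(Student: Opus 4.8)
The plan is to first establish the uniform convergence $S_{\tau}\to G$ on all of $\mathbb{R}$ by an elementary pointwise estimate, and then transfer it to the objective functions via the triangle inequality, observing that the per-sample approximation error is bounded uniformly in all of the parameters.

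First I would study the error $\delta_{\tau}(x)\triangleq S_{\tau}(x)-G(x)=\frac{1}{\tau}\log(1+e^{\tau x})-\max\{0,x\}$ by splitting on the sign of $x$. For $x\ge 0$ we have $G(x)=x$, so
\[
\delta_{\tau}(x)=\frac{1}{\tau}\log\!\big((1+e^{\tau x})e^{-\tau x}\big)=\frac{1}{\tau}\log\!\big(1+e^{-\tau x}\big)\in\big(0,\tfrac{\log 2}{\tau}\big],
\]
because $0<e^{-\tau x}\le 1$. For $x<0$ we have $G(x)=0$, so $\delta_{\tau}(x)=\frac{1}{\tau}\log(1+e^{\tau x})\in\big(0,\tfrac{\log 2}{\tau}\big)$ because $0<e^{\tau x}<1$. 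Hence $0<\delta_{\tau}(x)\le\frac{\log 2}{\tau}$ for every $x\in\mathbb{R}$, so $\sup_{x\in\mathbb{R}}|S_{\tau}(x)-G(x)|\le\frac{\log 2}{\tau}\to 0$ as $\tau\to\infty$, which is exactly uniform convergence (and the bound is attained at $x=0$, so the rate $O(1/\tau)$ is sharp).

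Next I would instantiate this with $x=\beta_{\vec{w},\rho}(\vec{\bar{h}}_i)$ for each $i$. Since $F_{\tau}$ and $F$ differ only through the term $\frac{1}{n\lambda}\sum_{i=1}^{n}\big[S_{\tau}(\beta_{\vec{w},\rho}(\vec{\bar{h}}_i))-G(\beta_{\vec{w},\rho}(\vec{\bar{h}}_i))\big]$, the triangle inequality gives
\[
\big|F_{\tau}(\vec{w},\rho,\vec{\theta})-F(\vec{w},\rho,\vec{\theta})\big|\le\frac{1}{n\lambda}\sum_{i=1}^{n}\big|\delta_{\tau}(\beta_{\vec{w},\rho}(\vec{\bar{h}}_i))\big|\le\frac{1}{n\lambda}\cdot n\cdot\frac{\log 2}{\tau}=\frac{\log 2}{\lambda\tau}.
\]
This bound is independent of $\vec{w}$, $\rho$, $\vec{\theta}$ (and of the observed data), so letting $\tau\to\infty$ yields $F_{\tau}\to F$ uniformly over the entire parameter space, which in particular gives the stated convergence.

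There is no real obstacle here; the only step requiring any care is the case split in the second paragraph, which is what produces the clean constant $\log 2/\tau$ that is uniform in $x$ — everything after that is a one-line consequence. I would additionally record the explicit global bound $\sup_{\vec{w},\rho,\vec{\theta}}|F_{\tau}-F|\le\log 2/(\lambda\tau)$, since this is precisely the input needed for the subsequent Theorem~\ref{theorem1}: uniform convergence of the objectives (together with the feasibility constraints \eqref{svm_unconstrained_approx_constrained}) immediately forces convergence of their optimal values.
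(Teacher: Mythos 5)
Your proof is correct and follows essentially the same route as the paper: both reduce the claim to the per-sample bound $\sup_{x}\bigl(S_{\tau}(x)-G(x)\bigr)=\log(2)/\tau$, attained at $x=0$, and then scale by $\frac{1}{n\lambda}\cdot n$ to get the uniform bound $\log(2)/(\lambda\tau)$ on $|F_{\tau}-F|$. The only difference is cosmetic: you obtain the $\log(2)/\tau$ bound by rewriting the gap in closed form ($\frac{1}{\tau}\log(1+e^{-\tau x})$ for $x\geq 0$ and $\frac{1}{\tau}\log(1+e^{\tau x})$ for $x<0$), whereas the paper derives it via sign arguments on $\partial S_{\tau}/\partial\tau$ and on the derivative of the difference in $x$ — your version is, if anything, a cleaner derivation of the same estimate.
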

\begin{proof}[Proof of Proposition \ref{proposition1}]\label{proof_proposition1}
In order to simplify our notation, for any given $\vec{w}$, $\vec{\theta}$, $\vec{X}_i$ and $\rho$, we denote $\beta_{\mbox{\scriptsize\vec{w}},\rho}(\vec{\bar{h}}_i)$ as $\Omega$. We first show that $S_{\tau}(\Omega) \geq G(\Omega  )$, $\forall \tau>0$. Since
\begin{align*}
S_{\tau}(\Omega)&= \frac{1}{\tau} \log \big(1+e^{\tau \Omega}\big)\\
&\geq\frac{1}{\tau} \log \big(e^{\tau \Omega}\big)\\
&=\Omega
\end{align*}
and $S_{\tau}(\Omega)\geq 0$, we have $S_{\tau}(\Omega)\geq G(\Omega)=\max \lbrace 0,\Omega\rbrace$. Then, for any $\Omega\geq 0$, we have
\begin{align*}
\frac{\partial S_{\tau}(\Omega)}{\partial \tau}&=\frac{-1}{\tau^2} \log \big(1+e^{\tau \Omega}\big)+\frac{1}{\tau}\frac{\Omega e^{\tau \Omega}}{1+e^{\tau \Omega}}\\
&<\frac{-1}{\tau}  \Omega+\frac{1}{\tau}\frac{\Omega e^{\tau \Omega}}{1+e^{\tau \Omega}}\\
&\leq 0
\end{align*}
and for any $\Omega<0$, we have
\begin{align*}
\frac{\partial S_{\tau}(\Omega)}{\partial \tau}&=\frac{-1}{\tau^2} \log \big(1+e^{\tau \Omega}\big)+\frac{1}{\tau}\frac{\Omega e^{\tau \Omega}}{1+e^{\tau \Omega}}\\
&<0,
\end{align*}
thus, we conclude that $S_{\tau}(\Omega)$ is a monotonically decreasing function of $\tau$. As the last step, we derive an upper bound for the difference $S_{\tau}(\Omega) - G(\Omega  )$. For $\Omega\geq 0$, the derivative of the difference is as follows
\begin{align*}
\frac{\partial (S_{\tau}(\Omega) - G(\Omega  ))}{\partial \Omega}=\frac{e^{\tau \Omega}}{1+e^{\tau \Omega}}-1<0,
\end{align*}
hence, the difference is a decreasing function of $\Omega$ for $ \Omega\geq0$. Therefore, the maximum value is $\log(2)/\tau$ and it occurs at $\Omega=0$. Similarly, for $\Omega<0$, the derivative of the difference is positive, which shows that the maximum for the difference occurs at $\Omega=0$. With this result, we obtain the following bound
\begin{align}
\label{svm_difference_bound}
\frac{\log(2)}{\tau}=\max_{\Omega}\big(S_{\tau}(\Omega) - G(\Omega  )\big).
\end{align}
Using \eqref{svm_difference_bound}, for any $\epsilon>0$, we can choose $\tau$ sufficiently large so that $S_{\tau}(\Omega) - G(\Omega  )<\epsilon$. Hence, as $\tau$ increases, $S_{\tau}(\Omega)$ uniformly converges to $G(\Omega)$. By averaging \eqref{svm_difference_bound} over all the data points and multiplying with $1/\lambda$, we obtain
\begin{align*}
\frac{\log(2)}{\lambda \tau}=\max_{\mbox{\scriptsize\vec{w}},\rho, \mbox{\scriptsize\vec{\theta}}}{\big(F_{\tau}(\vec{w},\rho,\vec{\theta}) -F(\vec{w},\rho,\vec{\theta})\big)},
\end{align*}
which proves the uniform convergence of $F_{\tau}(\cdot,\cdot,\cdot)$ to $F(\cdot,\cdot,\cdot)$.
\end{proof}
\begin{theorem}\label{theorem1}
Let $\vec{w}_{\tau}$ and $\rho_{\tau}$ be the solutions of \eqref{svm_unconstrained_approx} for any fixed $\vec{\theta}$. Then, $\vec{w}_{\tau}$ and $\rho_{\tau}$ are unique and $F_{\tau}(\vec{w}_{\tau},\rho_{\tau},\vec{\theta})$ converges to the minimum of $F(\vec{w},\rho,\vec{\theta})$.
\end{theorem}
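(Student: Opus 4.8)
The plan is to separate Theorem~\ref{theorem1} into its two assertions and handle them with standard convex-analytic tools: (i) \emph{uniqueness} of the minimizer $(\vec{w}_{\tau},\rho_{\tau})$ of $F_{\tau}(\cdot,\cdot,\vec{\theta})$ for fixed $\vec{\theta}$, which I would get from strict joint convexity (plus coercivity for existence), and (ii) \emph{convergence of the optimal value} $F_{\tau}(\vec{w}_{\tau},\rho_{\tau},\vec{\theta})$ to $\min_{\vec{w},\rho}F(\vec{w},\rho,\vec{\theta})$, which falls out of a two-sided sandwich using the uniform bound already established in Proposition~\ref{proposition1}.

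For (i), note first that with $\vec{\theta}$ fixed the vectors $\vec{\bar{h}}_i$ are fixed, so each $\beta_{\mbox{\scriptsize\vec{w}},\rho}(\vec{\bar{h}}_i)=\rho-\vec{w}^{T}\vec{\bar{h}}_i$ is affine in $(\vec{w},\rho)$; since $t\mapsto\frac{1}{\tau}\log(1+e^{\tau t})$ is convex, each term $S_{\tau}(\beta_{\mbox{\scriptsize\vec{w}},\rho}(\vec{\bar{h}}_i))$ is convex in $(\vec{w},\rho)$, and together with $\|\vec{w}\|^2/2$ and the linear term $-\rho$ the function $F_{\tau}$ is convex. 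I would then upgrade this to strict convexity: the Hessian of $F_{\tau}$ equals the positive semidefinite block $\mathrm{diag}(\vec{I},0)$ plus $\frac{1}{n\lambda}\sum_i \tau\,\sigma_i(1-\sigma_i)\,(-\vec{\bar{h}}_i,1)(-\vec{\bar{h}}_i,1)^{T}$, where $\sigma_i=\sigma(\tau\beta_{\mbox{\scriptsize\vec{w}},\rho}(\vec{\bar{h}}_i))\in(0,1)$; testing against a nonzero direction $(\vec{u},t)$ gives $\|\vec{u}\|^2+\frac{1}{n\lambda}\sum_i\tau\sigma_i(1-\sigma_i)(t-\vec{u}^{T}\vec{\bar{h}}_i)^2$, which is strictly positive because it equals $\|\vec{u}\|^2>0$ when $\vec{u}\neq\vec{0}$ and equals $\frac{t^2}{n\lambda}\sum_i\tau\sigma_i(1-\sigma_i)>0$ when $\vec{u}=\vec{0},t\neq0$ (here the strict positivity of $\sigma_i(1-\sigma_i)$ is precisely what makes the $\rho$-direction nondegenerate). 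For existence of a minimizer I would invoke the elementary inequalities from the proof of Proposition~\ref{proposition1}, $S_{\tau}(\Omega)\geq\Omega$ and $S_{\tau}(\Omega)\geq 0$, to obtain $F_{\tau}(\vec{w},\rho,\vec{\theta})\geq \|\vec{w}\|^2/2+\max\{-\rho,\ (\tfrac1\lambda-1)\rho-\tfrac1{n\lambda}\vec{w}^{T}\textstyle\sum_i\vec{\bar{h}}_i\}$, which (under the usual OC-SVM regime $0<\lambda<1$) is coercive in $(\vec{w},\rho)$; a continuous, coercive, strictly convex function on $\mathbb{R}^{m+1}$ has a unique minimizer, giving uniqueness of $\vec{w}_{\tau}$ and $\rho_{\tau}$.

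For (ii), set $m_{\tau}\triangleq F_{\tau}(\vec{w}_{\tau},\rho_{\tau},\vec{\theta})=\min_{\vec{w},\rho}F_{\tau}(\vec{w},\rho,\vec{\theta})$ and $m\triangleq\inf_{\vec{w},\rho}F(\vec{w},\rho,\vec{\theta})$ (this infimum is in fact attained, by applying the same convexity/coercivity argument to $F$ after noting $G(\Omega)\geq\Omega$ and $G(\Omega)\geq 0$, though the argument only needs the infimum). Proposition~\ref{proposition1} supplies $0\leq F_{\tau}(\vec{w},\rho,\vec{\theta})-F(\vec{w},\rho,\vec{\theta})\leq \log(2)/(\lambda\tau)$ uniformly in $(\vec{w},\rho)$. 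The upper estimate, taken over all $(\vec{w},\rho)$, gives $m_{\tau}\leq F_{\tau}(\vec{w},\rho,\vec{\theta})\leq F(\vec{w},\rho,\vec{\theta})+\log(2)/(\lambda\tau)$ and hence, taking the infimum on the right, $m_{\tau}\leq m+\log(2)/(\lambda\tau)$; the lower estimate, evaluated at $(\vec{w}_{\tau},\rho_{\tau})$, gives $m_{\tau}=F_{\tau}(\vec{w}_{\tau},\rho_{\tau},\vec{\theta})\geq F(\vec{w}_{\tau},\rho_{\tau},\vec{\theta})\geq m$. Combining, $m\leq m_{\tau}\leq m+\log(2)/(\lambda\tau)$, so $m_{\tau}\to m$ as $\tau\to\infty$, which is the claim.

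The routine pieces are the convexity bookkeeping in (i) and the sandwich in (ii); the one place that needs genuine care — and the likely main obstacle — is the coercivity/existence step, since the $-\rho$ term is unbounded below and must be dominated by the softplus sum, which is exactly where the standing feasibility condition $0<\lambda<1$ (equivalently the customary $\nu\in(0,1)$ range for OC-SVM) is used; a second subtlety worth stating explicitly is that strict convexity in the $\rho$ coordinate relies on $\sigma_i(1-\sigma_i)>0$, i.e.\ on the smoothed loss being strictly convex along its argument for every finite $\beta_{\mbox{\scriptsize\vec{w}},\rho}(\vec{\bar{h}}_i)$.
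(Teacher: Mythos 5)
Your proof is correct, and its second half (the value-convergence part) is essentially the paper's argument: the two-sided bound $m \le m_{\tau} \le m + \log(2)/(\lambda\tau)$ is just the paper's chain $F_{\tau}(\vec{w}^*,\rho^*,\vec{\theta}) \ge F_{\tau}(\vec{w}_{\tau},\rho_{\tau},\vec{\theta}) \ge F(\vec{w}_{\tau},\rho_{\tau},\vec{\theta}) \ge F(\vec{w}^*,\rho^*,\vec{\theta})$ combined with Proposition \ref{proposition1}, with the explicit rate made visible. Where you genuinely diverge is the uniqueness part, and your route is the stronger one: you form the full Hessian in the joint variable $(\vec{w},\rho)$ and show it is positive definite everywhere, which gives strict \emph{joint} convexity and hence uniqueness of the pair $(\vec{w}_{\tau},\rho_{\tau})$. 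The paper instead checks strict convexity in $\vec{w}$ for fixed $\rho$ and strict convexity in $\rho$ for fixed $\vec{w}$ separately; coordinatewise strict convexity does not by itself imply uniqueness of a joint minimizer, so your joint-Hessian computation closes a small logical gap rather than merely restating the paper's argument. You also treat existence explicitly via coercivity, using $S_{\tau}(\Omega)\ge\max\{0,\Omega\}$ and the condition $0<\lambda<1$ (the usual OC-SVM $\nu$-range); the paper never addresses existence and simply posits the solutions, and it nowhere states the restriction $\lambda<1$, so your observation that the objective is unbounded below (hence the theorem vacuous) for $\lambda>1$ is a worthwhile addition, though not required by the literal statement, which presupposes that the minimizers exist.
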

\begin{proof}[Proof of Theorem \ref{theorem1}]\label{proof_theorem1}
We have the following Hessian matrix of $F_{\tau}(\vec{w},\rho,\vec{\theta})$ with respect to $\vec{w}$
\begin{align*}
\nabla_{ \mbox{\scriptsize \vec{w}}}^{2}F_{\tau}(\vec{w},\rho,\vec{\theta})=\vec{I}+\frac{\tau}{n \lambda} \sum_{i=1}^{n} \frac{e^{\tau \beta_{\mbox{\scriptsize\vec{w}},\rho}(\vec{\bar{\mbox{\scriptsize h}}}_i) }}{\big(1+e^{\tau \beta_{\mbox{\scriptsize\vec{w}},\rho}(\vec{\bar{\mbox{\scriptsize h}}}_i) } \big)^2}\vec{\bar{h}}_i \vec{\bar{h}}_i^T,
\end{align*}
which satisfies $\vec{v}^T\nabla_{ \mbox{\scriptsize \vec{w}}}^{2}F_{\tau}(\vec{w},\rho,\vec{\theta})\vec{v}>0 $ for any nonzero column vector $\vec{v}$. Hence, the Hessian matrix is positive definite, which shows that $F_{\tau}(\vec{w},\rho,\vec{\theta})$ is strictly convex function of $\vec{w}$. Consequently, the solution $\vec{w}_{\tau}$ is both global and unique given any $\rho$ and $\vec{\theta}$. Additionally, we have the following second order derivative for $\rho$
\begin{align*}
\frac{\partial^2 F_{\tau}(\vec{w},\rho,\vec{\theta})}{\partial \rho^2}=\frac{\tau}{n \lambda} \sum_{i=1}^{n} \frac{e^{\tau \beta_{\mbox{\scriptsize\vec{w}},\rho}(\vec{\bar{\mbox{\scriptsize h}}}_i) }}{\big(1+e^{\tau \beta_{\mbox{\scriptsize\vec{w}},\rho}(\vec{\bar{\mbox{\scriptsize h}}}_i) } \big)^2} > 0,
\end{align*}
which implies that $F_{\tau}(\vec{w},\rho,\vec{\theta})$ is strictly convex function of $\rho$. As a result, the solution $\rho_{\tau}$ is both global and unique for any given $\vec{w}$ and $\vec{\theta}$.

Let $\vec{w}^{*}$ and $\rho^{*}$ be the solutions of \eqref{svm_unconstrained} for any fixed $\vec{\theta}$. From the proof of Proposition \ref{proposition1}, we have 
\begin{align}
\label{theorem1_eq1}
F_{\tau}(\vec{w}^*,\rho^*, \vec{\theta})\geq F_{\tau}(\vec{w}_{\tau},\rho_{\tau}, \vec{\theta}) &\geq F(\vec{w}_{\tau},\rho_{\tau}, \vec{\theta}) \nonumber \\& \geq F(\vec{w}^*,\rho^*, \vec{\theta}).
\end{align}
Using the convergence result in Proposition \ref{proposition1} and \eqref{theorem1_eq1}, we have
\begin{align*}
&\lim_{\tau \rightarrow \infty} F_{\tau}(\vec{w}_{\tau},\rho_{\tau}, \vec{\theta})\leq \lim_{\tau \rightarrow \infty} F_{\tau}(\vec{w}^*,\rho^*, \vec{\theta})=F(\vec{w}^*,\rho^*, \vec{\theta})\\
&\lim_{\tau \rightarrow \infty} F_{\tau}(\vec{w}_{\tau},\rho_{\tau}, \vec{\theta}) \geq F(\vec{w}^*,\rho^*, \vec{\theta}),
\end{align*}
which proves the following equality
\begin{align*}
\lim_{\tau \rightarrow \infty} F_{\tau}(\vec{w}_{\tau},\rho_{\tau}, \vec{\theta})=F(\vec{w}^*,\rho^*, \vec{\theta}).
\end{align*}
\end{proof}

\subsection{Anomaly Detection with the SVDD algorithm}\label{sec:sub_svdd}
In this subsection, we introduce an anomaly detection algorithm based on the SVDD formulation and provide the joint updates in order to learn both the LSTM and SVDD parameters. However, since the generic formulation is the same with the OC-SVM case, we only provide the required and distinct updates for the parameters and proof for the convergence of the approximated SVDD formulation to the actual one.

In the SVDD algorithm, we aim to find a hypersphere that encloses the normal data while leaving the anomalous data outside the hypersphere \cite{svdd}. For the sequence $\lbrace \vec{\bar{h}}_i \rbrace _{i=1}^{n}$, we have the following SVDD optimization problem \cite{svdd}
\begin{align}
\label{svdd_primal}
&\min_{\mbox{\scriptsize \vec{\theta}} \in \mathbb{R}^{n_{\theta}},\mbox{\scriptsize \vec{\tilde{c}}} \in \mathbb{R}^m, \xi \in \mathbb{R},R\in\mathbb{R}} \text{  }  R^2+ \frac{1}{n \lambda}\sum_{i=1}^n \xi_i\\
 &\text{ subject to: } \| \vec{\bar{h}}_i-\vec{\tilde{c}} \|^2 - R^2 \leq \xi_i\text{, } \xi_{i}\geq0 , \forall i \label{svdd_primal_constraint}\\
 &\vec{W}^{(\cdot)T}\vec{W}^{(\cdot)}=\vec{I},  \vec{R}^{(\cdot)T}\vec{R}^{(\cdot)}=\vec{I} \text{ and } \vec{b}^{(\cdot)T}\vec{b}^{(\cdot)}=1, \label{svdd_primal_constraint2}
\end{align}
where $\lambda>0$ is a trade-off parameter between $R^2$ and the total misclassification error, $R$ is the radius of the hypersphere and $\vec{\tilde{c}}$ is the center of the hypersphere. Additionally, $\vec{\theta}$ and $\xi$ represent the LSTM parameters and the slack variable respectively as in the OC-SVM case. After solving the constrained optimization problem in \eqref{svdd_primal}, \eqref{svdd_primal_constraint} and \eqref{svdd_primal_constraint2}, we detect anomalies using the following scoring function
\begin{align}
\label{decision_svdd}
l(\vec{X}_i)=\sgn(R^2 -\| \vec{\bar{h}}_i-\vec{\tilde{c}} \|^2).
\end{align}

\begin{algorithm}
\caption{Gradient Based Training for the Anomaly Detection Algorithm Based on OC-SVM}
\label{alg2}
\begin{algorithmic}[1]
\State{Initialize the LSTM parameters as $\vec{\theta}_0$ and the OC-SVM parameters as $\vec{w}_0$ and $\rho_0$}
\State{Determine a threshold $\epsilon$ as convergence criterion}
\State{$k=-1$}
\Do
\State{$k=k+1$}
\State{Using $\vec{\theta}_k$, obtain $\lbrace \vec{\bar{h}}\rbrace_{i=1}^n$ according to Fig. \ref{pooling}}
\State{Obtain $\vec{w}_{k+1}$, $\rho_{k+1}$ and $\vec{\theta}_{k+1}$ using \eqref{update_w}, \eqref{update_rho}, \eqref{svm_gradient_Wupdate} and Remark \ref{remark_svm_gradient} }
\doWhile{$\big( F_{\tau}(\vec{w}_{k+1},\rho_{k+1}, \vec{\theta}_{k+1}) -F_{\tau}(\vec{w}_{k},\rho_{k}, \vec{\theta}_{k})\big)^2>\epsilon$}
\State{Detect anomalies using \eqref{decision_svm} evaluated at $\vec{w}_k$, $\rho_k$ and $\vec{\theta}_k$ }
\end{algorithmic}
\end{algorithm}
\subsubsection{Quadratic Programming Based Training Algorithm}
In this subsection, we introduce a training algorithm based on quadratic programming for \eqref{svdd_primal}, \eqref{svdd_primal_constraint} and \eqref{svdd_primal_constraint2}. As in the OC-SVM case, we first assume that the LSTM parameters are fixed and then perform optimization over the SVDD parameters based on the fixed LSTM parameters. For \eqref{svdd_primal} and \eqref{svdd_primal_constraint}, we have the following Lagrangian
\begin{align}
L(\vec{\tilde{c}},\xi,R,\nu,\alpha)=&R^2+ \frac{1}{n\lambda}\sum_{i=1}^n \xi_i-\sum_{i=1}^{n} \nu_{i}\xi_{i} \nonumber \\
&-\sum_{i=1}^{n} \alpha_{i}(\xi_{i}-\| \vec{\bar{h}}_i-\vec{\tilde{c}} \|^2 + R^2) \label{lagrange_svdd_quad},
\end{align}
where $\nu_{i}$, $\alpha_{i} \geq 0$ are the Lagrange multipliers. Taking derivative of \eqref{lagrange_svdd_quad} with respect to $\vec{\tilde{c}}$, $\xi$ and $R$ and then setting the derivatives to zero yields
\begin{align}
&\vec{\tilde{c}}=\sum_{i=1}^{n}\alpha_{i}\vec{\bar{h}}_{i} \label{clagrange_svdd_quad}\\
&\sum_{i=1}^{n}\alpha_{i}=1 \text{ and } \alpha_{i}=1/(n\lambda)-\nu_{i} \text{, } \forall i. \label{alphalagrange_quad}
\end{align}
Putting \eqref{clagrange_svdd_quad} and \eqref{alphalagrange_quad} into \eqref{lagrange_svdd_quad}, we obtain a dual form for \eqref{svdd_primal} and \eqref{svdd_primal_constraint} as follows
\begin{align}
\label{svdd_dual}
&\min_{\mbox{\scriptsize \vec{\theta}}\in \mathbb{R}^{n_{\theta}}, \mbox{\scriptsize\vec{\alpha}} \in \mathbb{R}^n} \sum_{i=1}^{n} \sum_{j=1}^{n} \alpha_{i}\alpha_{j} \vec{\bar{h}}_i^T \vec{\bar{h}}_j - \sum_{i=1}^n \alpha_i \vec{\bar{h}}_i^T \vec{\bar{h}}_i \\
\label{svdd_dual_constraint}
&\text{subject to: } \sum_{i=1}^n \alpha_i =1 \text{ and } 0 \leq \alpha_i \leq 1/(n\lambda), \forall i\\
 &\vec{W}^{(\cdot)T}\vec{W}^{(\cdot)}=\vec{I},  \vec{R}^{(\cdot)T}\vec{R}^{(\cdot)}=\vec{I} \text{ and } \vec{b}^{(\cdot)T}\vec{b}^{(\cdot)}=1. \label{svdd_dual_constraint2}
\end{align}
Using \eqref{clagrange_svdd_quad}, we modify \eqref{decision_svdd} as
\begin{align}
\label{svdd_dual_decision}
l(\vec{X}_i)=\sgn \bigg( R^2-\sum_{k=1}^{n}& \sum_{j=1}^{n} \alpha_{k}\alpha_{j}  \vec{\bar{h}}_k^T \vec{\bar{h}}_j \nonumber\\& +2 \sum_{j=1}^n \alpha_j  \vec{\bar{h}}_j^T \vec{\bar{h}}_i-\vec{\bar{h}}_i^T \vec{\bar{h}}_i \bigg).
\end{align}
In order to solve the constrained optimization problem in \eqref{svdd_dual}, \eqref{svdd_dual_constraint} and \eqref{svdd_dual_constraint2}, we employ the same approach as in the OC-SVM case. We first fix a certain set of the LSTM parameters $\vec{\theta}$. Based on these parameters, we find the optimal $\vec{\alpha}$ using the SMO algorithm. After that, we fix $\vec{\alpha}$ to update $\vec{\theta}$ using the algorithm for optimization with orthogonality constraints. We repeat these procedures until we reach convergence. Finally, we evaluate \eqref{svdd_dual_decision} based on the converged parameters.

\begin{remark}
\label{svdd_remark_alpha}
In the SVDD case, we apply the SMO algorithm using the same procedures with the OC-SVM case. In particular, we first choose two parameters, e.g., $\alpha_{1}$ and $\alpha_{2}$, to minimize and fix the other parameters. Due to \eqref{svdd_dual_constraint}, the chosen parameters must obey \eqref{smo1}. Hence, we have the following update rule for $\alpha_2$ at the $k$\textsuperscript{th} iteration
\begin{align*}
\alpha_{k+1,2}=\frac{2(1-S)(K_{11}-K_{12})+K_{22}-K_{11}+M_1 - M_2}{2(K_{11}+K_{22}-2K_{12})},
\end{align*}
where $S=\sum_{j=3}^n \alpha_{k,j}$ and the other definitions are the same with the OC-SVM case. We then obtain $\alpha_{k+1,1}$ using \eqref{smo1}. By this, we obtain the updated values $\alpha_{k+1,2}$ and $\alpha_{k+1,1}$. For the remaining parameters, we repeat this procedure until reaching convergence.
\end{remark}
\begin{remark}
\label{svdd_dual_theta_remark}
For the SVDD case, we update $\vec{W}^{(\cdot)}$ at the $k$\textsuperscript{th} iteration as in \eqref{svm_dual_Wupdate}. However, instead of \eqref{derivative_W}, we have the following definition for $\vec{G}$
\begin{align*}
\vec{G}_{ij} = \frac{\partial \pi (\vec{\theta},\vec{\alpha}_{k+1})}{\partial \vec{W}^{(\cdot)}_{ij}},
\end{align*}
where
\begin{align*}
\pi(\vec{\theta},\vec{\alpha}_{k+1})\triangleq \sum_{i=1}^{n} \sum_{j=1}^{n} \alpha_{k+1,i}\alpha_{k+1,j} \vec{\bar{h}}_i^T \vec{\bar{h}}_j - \sum_{i=1}^n \alpha_{k+1,i} \vec{\bar{h}}_i^T \vec{\bar{h}}_i 
\end{align*}
at the $k$\textsuperscript{th} iteration. For the remaining parameters, we follow the procedure in Remark \ref{remark2}.
\end{remark}
Hence, we obtain a quadratic programming based training algorithm for our LSTM based anomaly detector, which is also described in Algorithm \ref{alg3} as a pseudocode.
\begin{algorithm}
\caption{Quadratic Programming Based Training for the Anomaly Detection Algorithm Based on SVDD}
\label{alg3}
\begin{algorithmic}[1]
\State{Initialize the LSTM parameters as $\vec{\theta}_0$ and the dual SVDD parameters as $\vec{\alpha}_0$}
\State{Determine a threshold $\epsilon$ as convergence criterion}
\State{$k=-1$}
\Do
\State{$k=k+1$}
\State{Using $\vec{\theta}_k$, obtain $\lbrace \vec{\bar{h}}\rbrace_{i=1}^n$ according to Fig. \ref{pooling}}
\State{Find optimal $\vec{\alpha}_{k+1}$ for $\lbrace \vec{\bar{h}}\rbrace_{i=1}^n$ using the procedure in Remark \ref{svdd_remark_alpha}}
\State{Based on $\vec{\alpha}_{k+1}$, obtain $\vec{\theta}_{k+1} $ using Remark \ref{svdd_dual_theta_remark}}
\doWhile{$\big( \pi(\vec{\theta}_{k+1},\vec{\alpha}_{k+1}) -\pi(\vec{\theta}_k,\vec{\alpha}_{k})\big)^2 >\epsilon$}
\State{Detect anomalies using \eqref{svdd_dual_decision} evaluated at $\vec{\theta}_k$ and $\vec{\alpha}_k$ }
\end{algorithmic}
\end{algorithm}
\subsubsection{Gradient Based Training Algorithm}
In this subsection, we introduce a training algorithm based on only the first order gradients for \eqref{svdd_primal}, \eqref{svdd_primal_constraint} and \eqref{svdd_primal_constraint2}. We again use the $G(\cdot)$ function in \eqref{hinge_svm} in order to eliminate the constraint in \eqref{svdd_primal_constraint} as follows
\begin{align}
\label{svdd_unconstrained}
&\min_{\mbox{\scriptsize \vec{\theta}} \in \mathbb{R}^{n_{\theta}},\mbox{\scriptsize \vec{\tilde{c}}} \in \mathbb{R}^m,R\in\mathbb{R}} \text{  }  R^2+ \frac{1}{n\lambda}\sum_{i=1}^nG(\Psi_{R,\mbox{\scriptsize \vec{\tilde{c}}}}(\vec{\bar{h}}_i))\\
 \label{svdd_unconstrained_constrained}
& \text{s.t.:}\vec{W}^{(\cdot)T}\vec{W}^{(\cdot)}=\vec{I},  \vec{R}^{(\cdot)T}\vec{R}^{(\cdot)}=\vec{I} \text{ and } \vec{b}^{(\cdot)T}\vec{b}^{(\cdot)}=1,
\end{align}
where 
\begin{align*}
\Psi_{R,\mbox{\scriptsize \vec{\tilde{c}}}}(\vec{\bar{h}}_i)\triangleq \| \vec{\bar{h}}_i-\vec{\tilde{c}} \|^2 - R^2.
\end{align*}
Since the gradient based methods cannot optimize \eqref{svdd_unconstrained} due to the nondifferentiable function $G(\cdot)$, we employ $S_{\tau}(\cdot)$ instead of $G(\cdot)$ and modify \eqref{svdd_unconstrained} as 
\begin{align}
\label{svdd_unconstrained_approx}
&\min_{\mbox{\scriptsize \vec{\theta}} \in \mathbb{R}^{n_{\theta}},\mbox{\scriptsize \vec{\tilde{c}}} \in \mathbb{R}^m,R\in\mathbb{R}} \text{} F_{\tau}(\vec{\tilde{c}},R,\vec{\theta})=  R^2+\frac{1}{n\lambda} \sum_{i=1}^n S_{\tau}(\Psi_{R,\mbox{\scriptsize \vec{\tilde{c}}}}(\vec{\bar{h}}_i))\\
 \label{svdd_unconstrained_approx_constrained}
& \text{s.t.:}\vec{W}^{(\cdot)T}\vec{W}^{(\cdot)}=\vec{I},  \vec{R}^{(\cdot)T}\vec{R}^{(\cdot)}=\vec{I} \text{ and } \vec{b}^{(\cdot)T}\vec{b}^{(\cdot)}=1,
\end{align}
where $F_{\tau}(\cdot,\cdot,\cdot)$ is the objective function of \eqref{svdd_unconstrained_approx}. To obtain the optimal values for \eqref{svdd_unconstrained_approx} and \eqref{svdd_unconstrained_approx_constrained}, we update $\vec{\tilde{c}}$, $R$ and $\vec{\theta}$ till we reach either a local or a global optimum. For the updates of $\vec{\tilde{c}}$ and $R$, we employ the SGD algorithm, where we use the following gradient calculations. We first compute the gradient of $\vec{\tilde{c}}$ as
\begin{align}
\label{gradient_c}
\nabla_{\mbox{\scriptsize \vec{\tilde{c}}}}F_{\tau}(\vec{\tilde{c}},R,\vec{\theta})=\frac{1}{n\lambda}\sum_{i=1}^n
\frac{2(\vec{\tilde{c}}-\vec{\bar{h}}_i) e^{\tau \Psi_{\mbox{\scriptsize\vec{\tilde{c}}},R}(\vec{\bar{\mbox{\scriptsize h}}}_i) }}{1+e^{\tau \Psi_{\mbox{\scriptsize\vec{\tilde{c}}},R}(\vec{\bar{\mbox{\scriptsize h}}}_i) } }.
\end{align}
Using \eqref{gradient_c}, we have the following update 
\begin{align}
\label{update_c}
\vec{\tilde{c}}_{k+1}=\vec{\tilde{c}}_{k}-\mu\nabla_{\mbox{\scriptsize \vec{\tilde{c}}}}F_{\tau}(\vec{\tilde{c}},R,\vec{\theta})\Bigr|_{\scriptsize \substack{  \vec{\tilde{c}}=\vec{\tilde{c}}_{k} \\ R^2=R_{k}^2
\\  \vec{\theta}=\vec{\theta}_{k}}},
\end{align}
where the subscript $k$ represents the iteration number. Likewise, we compute the derivative of the objective function with respect to $R^2$ as
\begin{align}
\label{gradient_r}
\frac{\partial F_{\tau}(\vec{\tilde{c}},R,\vec{\theta})}{\partial R^2}=1+\frac{1}{n\lambda}\sum_{i=1}^n
\frac{- e^{\tau \Psi_{\mbox{\scriptsize\vec{\tilde{c}}},R}(\vec{\bar{\mbox{\scriptsize h}}}_i) }}{1+e^{\tau \Psi_{\mbox{\scriptsize\vec{\tilde{c}}},R}(\vec{\bar{\mbox{\scriptsize h}}}_i) } }.
\end{align}
With \eqref{gradient_r}, we update $R^2$ as
\begin{align}
\label{update_r}
R^2_{k+1}=R^2_{k}-\mu\frac{\partial F_{\tau}(\vec{\tilde{c}},R,\vec{\theta})}{\partial R^2}\Bigr|_{\scriptsize \substack{  \vec{\tilde{c}}=\vec{\tilde{c}}_{k} \\ R^2=R_{k}^2
\\  \vec{\theta}=\vec{\theta}_{k}}}.
\end{align}
For $\vec{\theta}$, the gradient calculation is as follows
\begin{align}
\label{svdd_gradient_W}
\frac{\partial F_{\tau}(\vec{\tilde{c}},R,\vec{\theta})}{\partial \vec{W}^{(\cdot)}_{ij}}=\sum_{i=1}^n
\frac{2(\partial \vec{\bar{h}}_i/\partial \vec{W}^{(\cdot)}_{ij})^T(\vec{\bar{h}}_i-\vec{\tilde{c}}) e^{\tau \Psi_{\mbox{\scriptsize\vec{\tilde{c}}},R}(\vec{\bar{\mbox{\scriptsize h}}}_i) }}{n\lambda(1+e^{\tau \Psi_{\mbox{\scriptsize\vec{\tilde{c}}},R}(\vec{\bar{\mbox{\scriptsize h}}}_i) }) }.
\end{align}
Using \eqref{svdd_gradient_W}, we have the following update
\begin{align}
\label{svdd_W_update}
\vec{W}_{k+1}^{(\cdot)}= \bigg( \vec{I}+\frac{\mu}{2} \vec{B}_{k} \bigg)^{-1} \bigg( \vec{I}-\frac{\mu}{2} \vec{B}_{k} \bigg)\vec{W}_{k}^{(\cdot)},
\end{align}
where $\vec{B}_k=\vec{M}_k (\vec{W}^{(\cdot)}_k)^{T}-\vec{W}^{(\cdot)}_k\vec{M}_k^T$ and 
\begin{align}
\label{svdd_W_derivative}
\vec{M}_{ij} \triangleq\frac{ \partial F_{\tau}(\vec{\tilde{c}},R,\vec{\theta})}{\partial \vec{W}^{(\cdot)}_{ij}}.
\end{align}
\begin{remark}
\label{remark_svdd_gradient}
For $\vec{R}^{(\cdot)}$ and $\vec{b}^{(\cdot)}$, we first compute the gradient of the objective function with respect to the chosen parameter as in \eqref{svdd_W_derivative}. We then obtain $\vec{B}_k$ according to the chosen parameter. Using $\vec{B}_k$, we update the chosen parameter as in \eqref{svdd_W_update}.
\end{remark}
\begin{remark}
\label{svd_semi}
In the semi-supervised framework, we have the following optimization problem for our SVDD based algorithms \cite{supervised_anomaly}
\begin{align}
&\min_{\mbox{\scriptsize\vec{\theta}} ,\mbox{\scriptsize\vec{\tilde{c}}},R,\xi, \gamma, \eta }\text{    } R^2-C_1 \gamma+C_2 \sum_{i=1}^l \xi_i+C_3 \sum_{j=l+1}^{l+k} \eta_j \label{semi_svdd_primal}\\ 
 &\text{ s.t.:}  \| \vec{\bar{h}}_i-\vec{\tilde{c}} \|^2 - R^2 \leq \xi_i \text{, } \xi_{i}\geq0\text{, } \forall_{i=1}^{l} \label{semi_svdd_primal_constraint}\\
 & \hspace{0.3cm}y_j(\| \vec{\bar{h}}_j-\vec{\tilde{c}} \|^2 - R^2) \leq -\gamma+\eta_j\text{, }  \eta_{j}\geq0 \text{, } \forall_{j=l+1}^{l+k}  \label{semi_svdd_primal_constraint2}\\
 &\vec{W}^{(\cdot)T}\vec{W}^{(\cdot)}=\vec{I},  \vec{R}^{(\cdot)T}\vec{R}^{(\cdot)}=\vec{I} \text{ and } \vec{b}^{(\cdot)T}\vec{b}^{(\cdot)}=1, \label{semi_svdd_primal_constraint3}
\end{align}
where $\eta \in \mathbb{R}$ is a slack variable as $\xi$, $\gamma \in \mathbb{R}$ is the margin of the labeled data instances, $C_1$, $C_2$ and $C_3$ are trade-off parameters, $k$ and $l$ are the number of the labeled and unlabeled data instances, respectively and $y_j \in \lbrace -1,+1\rbrace$ represents the label of the $j$\textsuperscript{th} data instance.

For the quadratic programming based training method, we modify all the steps from \eqref{lagrange_svdd_quad} to \eqref{svdd_dual_decision}, Remark \ref{svdd_remark_alpha} and Remark \ref{svdd_dual_theta_remark} with respect to \cref{semi_svdd_primal,semi_svdd_primal_constraint,semi_svdd_primal_constraint2,semi_svdd_primal_constraint3}. In a similar manner, we modify the equations from \eqref{svdd_unconstrained} to \eqref{svdd_W_derivative} according to \cref{semi_svdd_primal,semi_svdd_primal_constraint,semi_svdd_primal_constraint2,semi_svdd_primal_constraint3} in order to obtain the gradient based training method in the semi-supervised framework. For the supervised implementation, we follow the same procedure with the semi-supervised implementation for $l=0$ case.
\end{remark}
The complete algorithm is provided in Algorithm \ref{alg4}. In the following, we provide the convergence proof as in the OC-SVM case.
\begin{theorem}\label{theorem2}
Let $\vec{\tilde{c}}_{\tau}$ and $R_{\tau}^2$ be the solutions of \eqref{svdd_unconstrained_approx} for any fixed $\vec{\theta}$. Then, $\vec{\tilde{c}}_{\tau}$ and $R_{\tau}^2$ are unique and $F_{\tau}(\vec{\tilde{c}}_{\tau},R_{\tau},\vec{\theta})$ converges to the minimum of $F(\vec{\tilde{c}},R,\vec{\theta})$, i.e., defined as
\begin{align*}
F(\vec{\tilde{c}},R,\vec{\theta}) \triangleq R^2+ \frac{1}{n\lambda}\sum_{i=1}^nG(\Psi_{R,\mbox{\scriptsize \vec{\tilde{c}}}}(\vec{\bar{h}}_i)).
\end{align*}
\end{theorem}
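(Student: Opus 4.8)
The plan is to follow the proof of Theorem \ref{theorem1} essentially line by line, the one genuinely new ingredient being the curvature analysis: here the argument $\Psi_{R,\mbox{\scriptsize \vec{\tilde{c}}}}(\vec{\bar{h}}_i)=\|\vec{\bar{h}}_i-\vec{\tilde{c}}\|^2-R^2$ is quadratic in $\vec{\tilde{c}}$ rather than affine in the decision variables, so positive definiteness of the Hessian is not immediate. First I would fix $\vec{\theta}$, treat $r:=R^2$ as the free scalar (consistent with the update \eqref{update_r}), and observe that $\Psi$ is jointly convex in $(\vec{\tilde{c}},r)$ while $S_\tau$ is convex and increasing, so $F_\tau(\vec{\tilde{c}},R,\vec{\theta})=r+\frac{1}{n\lambda}\sum_{i=1}^n S_\tau(\Psi_{R,\mbox{\scriptsize \vec{\tilde{c}}}}(\vec{\bar{h}}_i))$ is a sum of a linear term and convex compositions, hence convex in $(\vec{\tilde{c}},r)$.

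To upgrade this to strict convexity, and thereby to uniqueness of $(\vec{\tilde{c}}_\tau,R_\tau^2)$, I would write out the Hessian of $F_\tau$ with respect to $(\vec{\tilde{c}},r)$ explicitly, abbreviating $u_i:=\Psi_{R,\mbox{\scriptsize \vec{\tilde{c}}}}(\vec{\bar{h}}_i)$ and $\vec{a}_i:=\vec{\tilde{c}}-\vec{\bar{h}}_i$ and using $S_{\tau}'(u)=\sigma(\tau u)\in(0,1)$, $S_{\tau}''(u)=\tau\sigma(\tau u)(1-\sigma(\tau u))>0$. The block for $\vec{\tilde{c}}$ is $\frac{1}{n\lambda}\sum_i\big(4 S_\tau''(u_i)\vec{a}_i\vec{a}_i^T+2 S_\tau'(u_i)\vec{I}\big)$, the block for $r$ is $\frac{1}{n\lambda}\sum_i S_\tau''(u_i)$, and the cross block is $-\frac{2}{n\lambda}\sum_i S_\tau''(u_i)\vec{a}_i$. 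Evaluated on a direction $(\vec{v},t)$, the associated quadratic form regroups into a perfect square,
\begin{align*}
\frac{1}{n\lambda}\Big(\sum_{i=1}^n S_\tau''(u_i)\big(2\vec{v}^T\vec{a}_i-t\big)^2+2\|\vec{v}\|^2\sum_{i=1}^n S_\tau'(u_i)\Big),
\end{align*}
which is strictly positive unless $(\vec{v},t)=(\vec{0},0)$: if $\vec{v}\neq\vec{0}$ the residual $\frac{2}{n\lambda}\|\vec{v}\|^2\sum_i S_\tau'(u_i)$ is positive because $S_\tau'>0$ everywhere, and if $\vec{v}=\vec{0}$ but $t\neq0$ the first sum gives $\big(\sum_i S_\tau''(u_i)\big)t^2>0$. (Alternatively one can mirror Theorem \ref{theorem1} verbatim and simply check that the two diagonal blocks above are positive definite/positive.) Hence $F_\tau(\cdot,\cdot,\vec{\theta})$ is strictly convex, so its minimizer is unique; existence follows from coercivity of $F_\tau$ (as $\|\vec{\tilde{c}}\|\to\infty$ the sum blows up, as $r\to+\infty$ the leading $r$ blows up, and as $r\to-\infty$ one uses $S_\tau(u)\ge u$), or may be taken for granted as in Theorem \ref{theorem1}.

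For the convergence claim I would reuse Proposition \ref{proposition1} instead of reproving it: its proof actually establishes the bound $0\le S_\tau(\Omega)-G(\Omega)\le\log(2)/\tau$ for every real $\Omega$, so instantiating $\Omega=\Psi_{R,\mbox{\scriptsize \vec{\tilde{c}}}}(\vec{\bar{h}}_i)$ and averaging over $i$ yields $0\le F_\tau(\vec{\tilde{c}},R,\vec{\theta})-F(\vec{\tilde{c}},R,\vec{\theta})\le\log(2)/(\lambda\tau)$, uniformly in $(\vec{\tilde{c}},R)$. Letting $\vec{\tilde{c}}^{*}$ and $R^{*2}$ be a minimizer of \eqref{svdd_unconstrained} for the fixed $\vec{\theta}$, the chain
\begin{align*}
F_\tau(\vec{\tilde{c}}^{*},R^{*},\vec{\theta})\ \ge\ F_\tau(\vec{\tilde{c}}_\tau,R_\tau,\vec{\theta})\ \ge\ F(\vec{\tilde{c}}_\tau,R_\tau,\vec{\theta})\ \ge\ F(\vec{\tilde{c}}^{*},R^{*},\vec{\theta})
\end{align*}
holds by optimality of $(\vec{\tilde{c}}_\tau,R_\tau)$ for $F_\tau$, by $S_\tau\ge G$ pointwise, and by optimality of $(\vec{\tilde{c}}^{*},R^{*})$ for $F$. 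Sending $\tau\to\infty$ and using $F_\tau(\vec{\tilde{c}}^{*},R^{*},\vec{\theta})\to F(\vec{\tilde{c}}^{*},R^{*},\vec{\theta})$ squeezes $F_\tau(\vec{\tilde{c}}_\tau,R_\tau,\vec{\theta})\to F(\vec{\tilde{c}}^{*},R^{*},\vec{\theta})=\min_{\vec{\tilde{c}},R}F(\vec{\tilde{c}},R,\vec{\theta})$, which is the assertion.

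I expect the curvature step to be the main obstacle. In the OC-SVM case the Hessian is $\vec{I}$ plus a nonnegative combination of rank-one terms $\vec{\bar{h}}_i\vec{\bar{h}}_i^T$ and is trivially positive definite, whereas here the Hessian carries a cross term between $\vec{\tilde{c}}$ and $R^2$ together with rank-one terms $S_\tau''(u_i)\vec{a}_i\vec{a}_i^T$, and strict positive definiteness hinges precisely on the perfect-square regrouping displayed above together with the fact that $\sigma(\tau u)$ stays bounded away from $0$ and $1$ at every finite $u$, so the $\vec{I}$-block never degenerates. Everything else, namely the reuse of Proposition \ref{proposition1} and the sandwich/squeeze argument, is routine.
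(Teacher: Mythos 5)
Your proposal is correct, and its overall skeleton (curvature analysis for uniqueness, then the $S_\tau\ge G$ sandwich plus Proposition \ref{proposition1} squeeze for convergence) is the same as the paper's; the convergence half in particular is exactly what the paper does, since its proof of Theorem \ref{theorem2} simply invokes the argument of Theorem \ref{theorem1}, and your remark that the bound $0\le S_\tau(\Omega)-G(\Omega)\le\log(2)/\tau$ holds for every real $\Omega$ (so it applies with $\Omega=\Psi_{R,\mbox{\scriptsize \vec{\tilde{c}}}}(\vec{\bar{h}}_i)$) makes explicit a step the paper leaves implicit. Where you genuinely diverge is the uniqueness part: the paper mirrors Theorem \ref{theorem1} and only computes the two marginal curvatures, namely the Hessian $\nabla^2_{\mbox{\scriptsize \vec{\tilde{c}}}}F_\tau$ for fixed $R$ (which equals your $\vec{\tilde{c}}$-block, $\frac{1}{n\lambda}\sum_i\bigl(2S_\tau'(u_i)\vec{I}+4S_\tau''(u_i)\vec{a}_i\vec{a}_i^T\bigr)$ after rewriting) and the scalar second derivative in $R^2$ for fixed $\vec{\tilde{c}}$, concluding that $\vec{\tilde{c}}_\tau$ is unique given $R$ and that $R_\tau^2$ is unique given $\vec{\tilde{c}}$. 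You instead form the full joint Hessian in $(\vec{\tilde{c}},R^2)$, including the cross block $-\frac{2}{n\lambda}\sum_iS_\tau''(u_i)\vec{a}_i$, and your perfect-square regrouping of the quadratic form is correct and yields strict \emph{joint} convexity. This buys you something real: coordinate-wise strict convexity does not by itself imply that the joint minimizer $(\vec{\tilde{c}}_\tau,R_\tau^2)$ is unique, so your argument actually proves the uniqueness claim as stated in the theorem, whereas the paper's proof (like its Theorem \ref{theorem1} template, where the variables genuinely decouple more benignly) only certifies uniqueness of each block with the other held fixed; you also supply an existence/coercivity remark the paper omits. The trade-off is that the paper's route is shorter and stays formally parallel to the OC-SVM case, while yours is slightly longer but closes the joint-uniqueness gap.
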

\begin{proof}[Proof of Theorem \ref{theorem2}]\label{proof_theorem2}
We have the following Hessian matrix of $F_{\tau}(\vec{\vec{\tilde{c}}},R,\vec{\theta})$ with respect to $\vec{\tilde{c}}$
\begin{align*}
\nabla_{ \mbox{\scriptsize \vec{\tilde{c}}}}^{2}F_{\tau}(\vec{\tilde{c}},R,\vec{\theta})=\sum_{i=1}^{n} \frac{2\vec{I}(\Omega_i+\Omega_i^2)+4\tau \Omega_i (\vec{\tilde{c}}-\vec{\bar{h}}_i)(\vec{\tilde{c}}-\vec{\bar{h}}_i)^T}{n\lambda\big(1+\Omega_i \big)^2},
\end{align*}
where $\Omega_i = e^{\tau\Psi_{\mbox{\scriptsize\vec{\tilde{c}}},R}(\vec{\bar{\mbox{\scriptsize h}}}_i)  }$, which implies $\vec{v}^T\nabla_{ \mbox{\scriptsize \vec{\tilde{c}}}}^{2}F_{\tau}(\vec{\tilde{c}},R,\vec{\theta})\vec{v}>0 $ for any nonzero column vector $\vec{v}$. Thus, the Hessian matrix is positive definite, which shows that $F_{\tau}(\vec{w},\rho,\vec{\theta})$ is strictly convex function of $\vec{\tilde{c}}$. As a result, the solution $\vec{\tilde{c}}_{\tau}$ is both global and unique given any $R$ and $\vec{\theta}$. In addition to this, we have the following second order derivative for $R^2$
\begin{align*}
\frac{\partial^2 F_{\tau}(\vec{\tilde{c}},R,\vec{\theta})}{\partial (R^2)^2}=\frac{\tau}{n \lambda} \sum_{i=1}^{n} \frac{e^{\tau \Psi_{\mbox{\scriptsize\vec{\tilde{c}}},R}(\vec{\bar{\mbox{\scriptsize h}}}_i)  }}{\big(1+e^{\tau \Psi_{\mbox{\scriptsize\vec{\tilde{c}}},R}(\vec{\bar{\mbox{\scriptsize h}}}_i) } \big)^2} > 0,
\end{align*}
which implies that $F_{\tau}(\vec{\tilde{c}},R,\vec{\theta})$ is strictly convex function of $R^2$. Therefore, the solution $R_{\tau}^2$ is both global and unique for any given $\vec{\tilde{c}}$ and $\vec{\theta}$.

The convergence proof directly follows the proof of Theorem \ref{theorem1}.
\end{proof}
\begin{algorithm}
\caption{Gradient Based Training for the Anomaly Detection Algorithm Based on SVDD}
\label{alg4}
\begin{algorithmic}[1]
\State{Initialize the LSTM parameters as $\vec{\theta}_0$ and the SVDD parameters as $\vec{\tilde{c}}_0$ and $R^2_0$}
\State{Determine a threshold $\epsilon$ as convergence criterion}
\State{$k=-1$}
\Do
\State{$k=k+1$}
\State{Using $\vec{\theta}_k$, obtain $\lbrace \vec{\bar{h}}\rbrace_{i=1}^n$ according to Fig. \ref{pooling}}
\State{Obtain $\vec{\tilde{c}}_{k+1}$, $R^2_{k+1}$ and $\vec{\theta}_{k+1}$ using \eqref{update_c}, \eqref{update_r}, \eqref{svdd_W_update} and Remark \ref{remark_svdd_gradient} }
\doWhile{$\big( F_{\tau}(\vec{\tilde{c}}_{k+1},R_{k+1}, \vec{\theta}_{k+1}) -F_{\tau}(\vec{\tilde{c}}_{k},R_{k}, \vec{\theta}_{k})\big)^2>\epsilon$}
\State{Detect anomalies using \eqref{decision_svdd} evaluated at $\vec{\tilde{c}}_k$, $R^2_k$ and $\vec{\theta}_k$ }
\end{algorithmic}
\end{algorithm}
\section{Simulations}\label{sec:simulations}
In this section, we demonstrate the performances of the algorithms on several different datasets. We first evaluate the performances on a dataset that contains variable length data sequences, i.e., the digit dataset \cite{uci}. We then compare the anomaly detection performances on several different benchmark real datasets such as the occupancy \cite{occupancy}, Hong Kong Exchange (HKE) rate \cite{hke}, http \cite{http} and Alcoa stock price \cite{alcoa} datasets. While performing experiments on real benchmark datasets, we also include the GRU based algorithms in order to compare their performances with the LSTM based ones. Note that since the introduced algorithms have bounded functions, e.g., the sigmoid function in the LSTM architecture, for all the experiments in this section, we normalize each dimension of the datasets into $[-1, 1]$.

Throughout this section, we denote the LSTM based OC-SVM anomaly detectors that are trained with the gradient and quadratic programming based algorithms as ``LSTM-GSVM" and ``LSTM-QPSVM", respectively. In a similar manner, we use ``LSTM-GSVDD" and ``LSTM-QPSVDD" for the SVDD based anomaly detectors. Moreover, for the labels of the GRU based algorithms, we replace the LSTM prefix with GRU. 

\subsection{Anomaly Detection for Variable Length Data Sequences}
In this section, we evaluate the performances of the introduced anomaly detectors on the digit dataset \cite{uci}. In this dataset, we have the pixel samples of digits, which were written on a tablet by several different authors \cite{uci}. Since the speed of writing varies from person to person, the number of samples for a certain digit might significantly differ. The introduced algorithms are able to process such kind of sequences thanks to their generic structure in Fig. \ref{pooling}. However, the conventional OC-SVM and SVDD algorithms cannot directly process these sequences \cite{svm1,svdd}. For these algorithms, we take the mean of each sequence to obtain a fixed length vector sequence, i.e., two dimensional in this case (two coordinates of a pixel). In order to evaluate the performances, we first choose a digit as normal and another digit as anomaly. We emphasize that randomly choose digits for illustration and we obtain similar performances for the other digits. We then divide the samples of these digits into training and test parts, where we allocate $60\%$ of the samples for the training part and $40\%$ for the test part. In both the training and test parts, we select the samples so that $10\%$ of the samples are anomalies. Then, using the training part, we optimize the parameters of each algorithm using two fold cross validation, where we also select certain crucial parameter, e.g., $\mu$. This procedure results in $\mu=0.05, 0.001,0.05$ and $0.01$ for LSTM-GSVM, LSTM-QPSVM, LSTM-GSVDD and LSTM-QPSVDD, respectively. Furthermore, we select the output dimension of the LSTM architecture as $m=2$ and the regularization parameter as $\lambda=0.5$ for all the algorithms. For the implementation of the conventional OC-SVM and SVDD algorithms, we use the libsvm library and their parameters are selected in a similar manner via built in optimization tools of libsvm \cite{libsvm}.

	 \begin{figure*}[h]
	\centering
	\captionsetup[subfigure]{oneside,margin={1cm,0cm}}
	\begin{subfigure}[t]{0.45\textwidth}
		\centering
		\includegraphics[width=1.11\textwidth, height=0.78\textwidth]{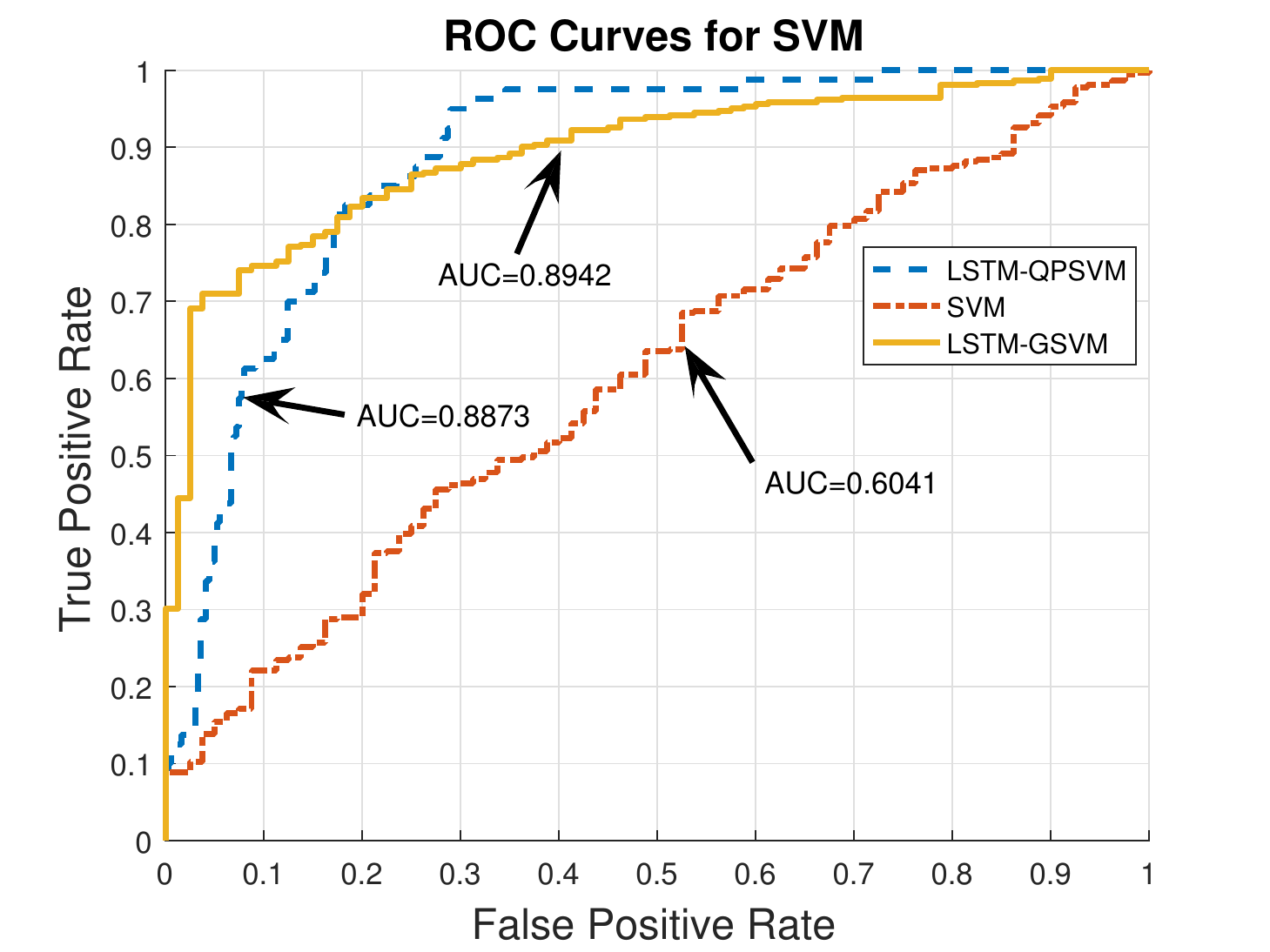}
	\caption{\centering\medskip} \label{svm0-9}
	\end{subfigure}\hspace*{\fill}
		\begin{subfigure}[t]{0.45\textwidth}
		\centering
		\includegraphics[width=1.11\textwidth, height=0.78\textwidth]{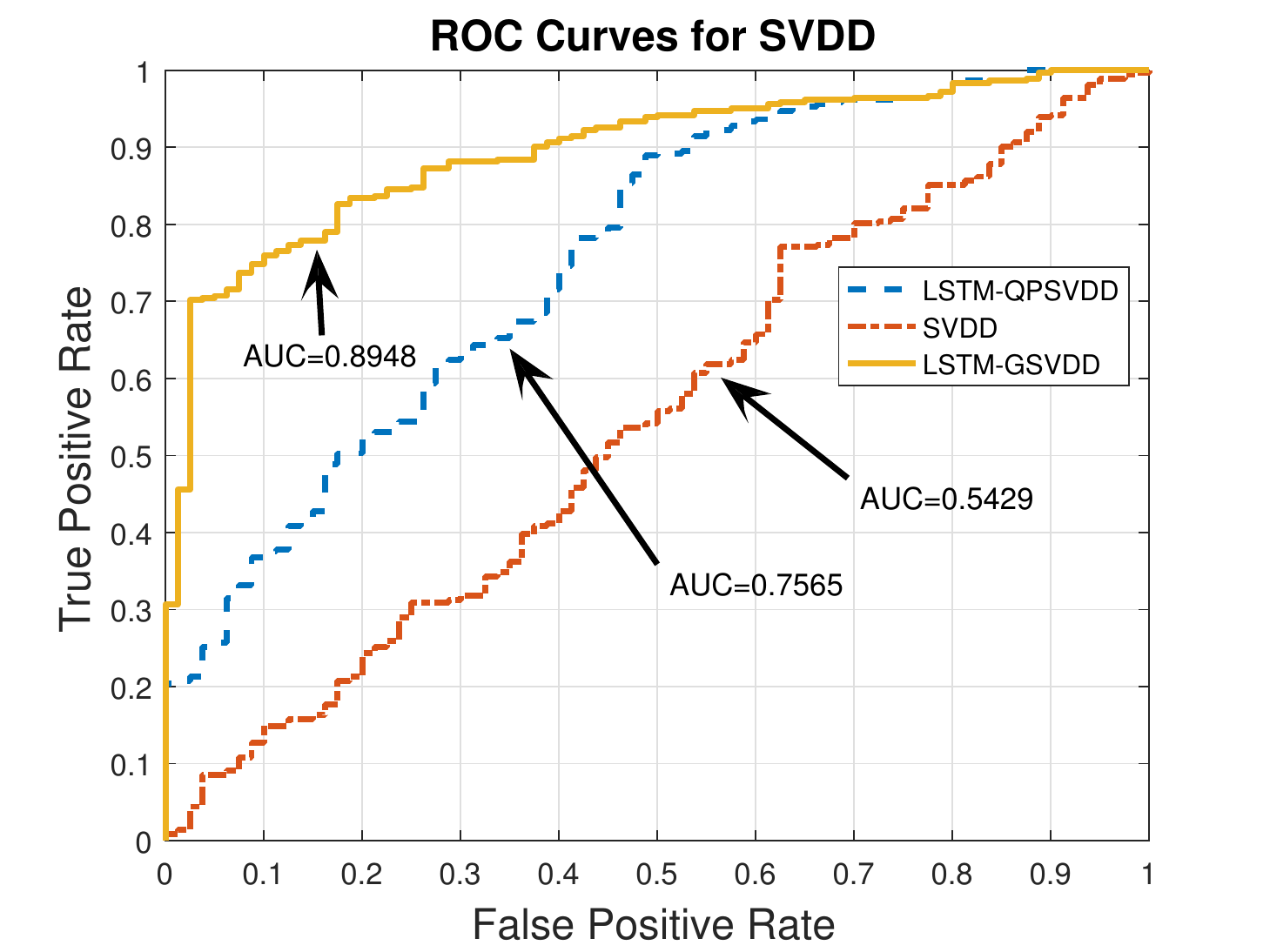}
		\caption{\centering\medskip} \label{svdd0-9}
	\end{subfigure} \hspace*{\fill}
	\medskip
	\caption{The ROC curves of the algorithms for the digit dataset, where we consider digit ``0" as normal and digit ``9" as anomaly (a) for the SVM based algorithms and (b) for the SVDD based algorithms.}\label{figs1}
	\end{figure*}
	
		 \begin{figure*}[h]
	\centering
	\captionsetup[subfigure]{oneside,margin={1cm,0cm}}
		\begin{subfigure}[t]{0.45\textwidth}
		\centering
		\includegraphics[width=1.11\textwidth, height=0.78\textwidth]{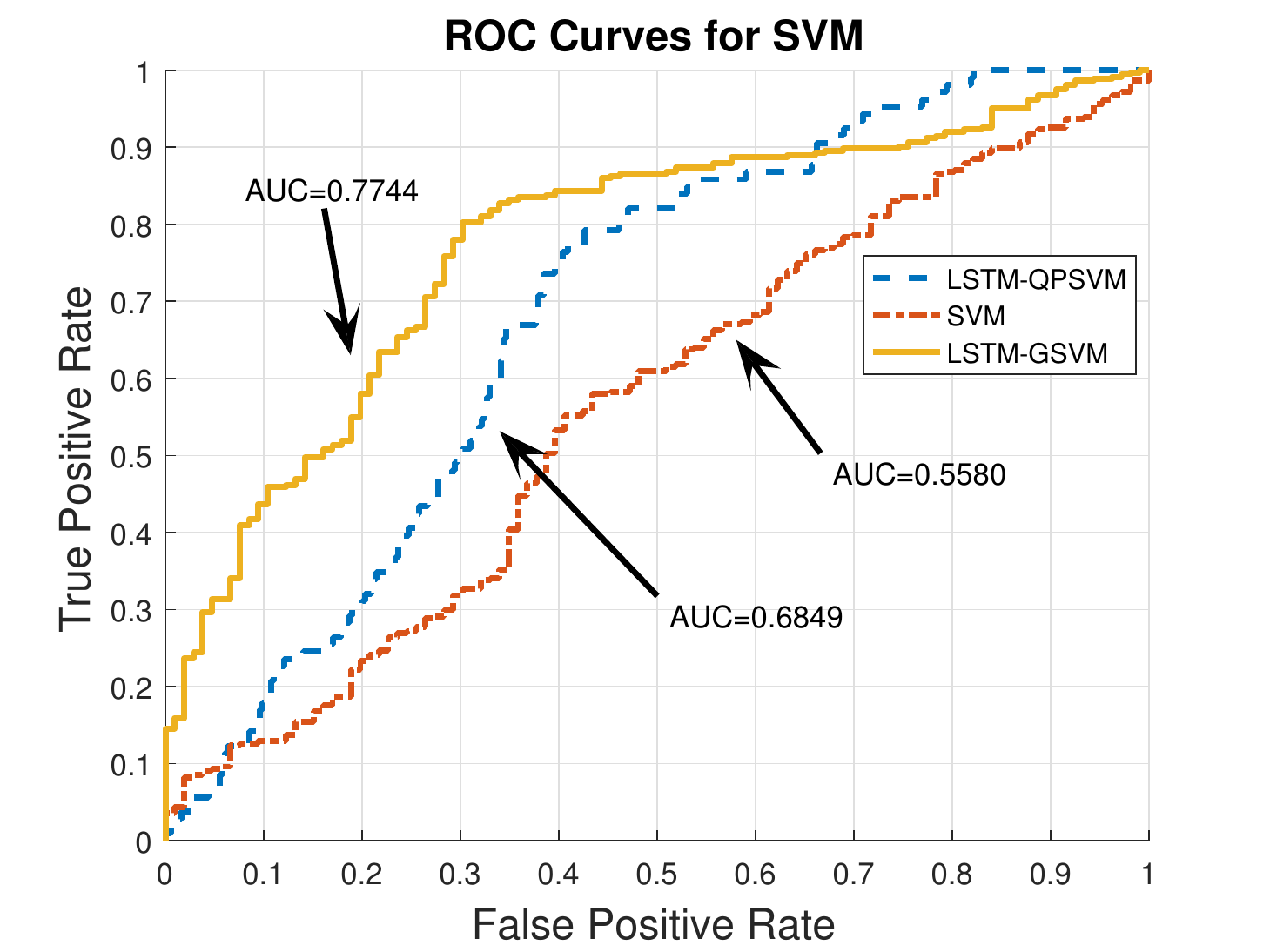}
		\caption{\centering\medskip} \label{svm1-7}
	\end{subfigure} \hspace*{\fill}
			\begin{subfigure}[t]{0.45\textwidth}
		\centering
		\includegraphics[width=1.11\textwidth, height=0.78\textwidth]{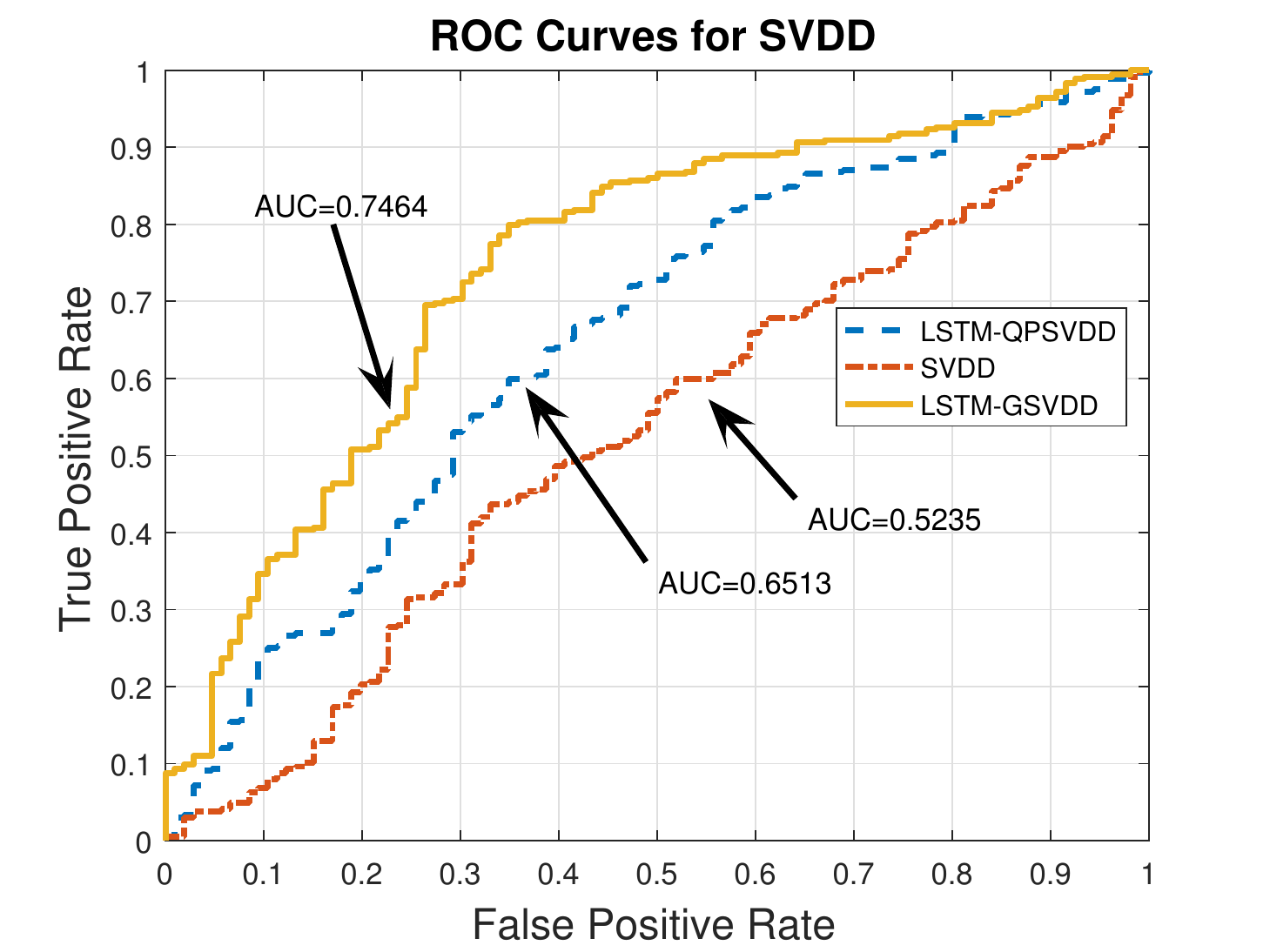}
		\caption{\centering\medskip} \label{svdd1-7}
	\end{subfigure} \hspace*{\fill}
	\medskip
	\caption{The ROC curves of the algorithms for the digit dataset, where we consider digit ``1" as normal and digit ``7" as anomaly (a) for the SVM based algorithms and (b) for the SVDD based algorithms.}\label{figs2}
	\end{figure*}

Here, we use area under ROC curve as a performance metric \cite{auc}. In a ROC curve, we plot true positive rate (TPR) as a function of false positive rate (FPR). Area under this curve, i.e., also known as AUC, is a well known performance measure for anomaly detection tasks \cite{auc}. In Fig. \ref{svm0-9} and \ref{svdd0-9}, we illustrate the ROC curves and provide the corresponding AUC scores, where we label digit ``0" and ``9" as normal and anomaly, respectively. For the OC-SVM and SVDD algorithms, since we directly take the mean of variable length data sequences to obtain fixed length sequences, they achieve significantly lower AUC scores compared to the introduced LSTM based methods. Among the LSTM based methods, LSTM-GSVM slightly outperforms LSTM-QPSVM. On the other hand, LSTM-GSVDD achieves significantly higher AUC than LSTM-QPSVDD. Since the quadratic programming based training method depends on the separated consecutive updates of the LSTM and SVM (or SVDD) parameters, it might not converge to even a local minimum. However, the gradient based method can guarantee convergence to at least a local minimum given a proper choice of the learning rate \cite{sayed}. Thus, although these methods might provide similar performances as in Fig. \ref{svm0-9}, it is also expected to obtain much higher performance from the gradient based method for certain cases as in Fig. \ref{svdd0-9}. However, overall, the introduced algorithms provide significantly higher AUC than the conventional methods.

Besides the previous scenario, we also consider a scenario, where we label digit ``1" and ``7" as normal and anomaly, respectively. In Fig. \ref{svm1-7} and \ref{svdd1-7}, we illustrate the ROC curves and provide the corresponding AUC scores. As in the previous scenario, for both the SVM and SVDD cases, the introduced algorithms achieve higher AUC scores than the conventional algorithms. Among the introduced algorithms, LSTM-GSVM and LSTM-GSVDD achieve the highest AUC scores for the SVM and SVDD cases, respectively. Furthermore, the AUC score of each algorithm is much lower compared to the previous case due to the similarity between digit ``1" and ``7".
	
	\subsection{Benchmark Real Datasets}
In this section, we compare the AUC scores of each algorithm on several different real benchmark datasets. Since our approach in this paper is generic, in addition to the LSTM based algorithms, we also implement our approach on the recently introduced RNN architecture, i.e., the GRU architecture, which is defined by the following equations \cite{gru}:
\begin{align}
\label{gruz}
&\vec{\tilde{z}}_{i,j}=\sigma\left(\vec{W}^{(\tilde{z})}\vec{x}_{i,j}+\vec{R}^{(\tilde{z})}\vec{h}_{i,j-1} \right) \\ \label{grur}
&\vec{r}_{i,j}=\sigma\left(\vec{W}^{(r)}\vec{x}_{i,j}+\vec{R}^{(r)}\vec{h}_{i,j-1} \right)\\\label{gruo}
&\vec{\tilde{h}}_{i,j}=g\left(\vec{W}^{(\tilde{h})}\vec{x}_{i,j}+\vec{r}_{i,j} \odot (\vec{R}^{(\tilde{h})}\vec{h}_{i,j-1}) \right)\end{align}\begin{align}\label{gruy}
&\vec{h}_{i,j}=\vec{\tilde{h}}_{i,j} \odot \vec{\tilde{z}}_{i,j} + \vec{h}_{i,j-1} \odot (\vec{1}-\vec{\tilde{z}}_{i,j}), 
\end{align}  
where $\vec{h}_{i,j} \in \mathbb{R}^{m}$ is the output vector and $\vec{x}_{i,j} \in \mathbb{R}^{p}$ is the input vector. Furthermore, $\vec{W}^{(\cdot)}$ and $\vec{R}^{(\cdot)}$ are the parameters of the GRU, where the sizes are selected according to the dimensionality of the input and output vectors. We then replace \cref{eq:z,eq:s,eq:f,eq:state,eq:o,eq:output} with \cref{gruz,grur,gruo,gruy} in Fig. \ref{pooling} to obtain GRU based anomaly detectors.

We first evaluate the performances of the algorithms on the occupancy dataset \cite{occupancy}. In this dataset, we have five features, which are relative humidity percentage, light (in lux), carbon dioxide level (in ppm), temperature (in Celsius) and humidity ratio, and our aim is to determine whether an office room is occupied or not based on the features. Here, we use the same procedure with the previous subsection to separate the test and training data. Moreover, using the training data, we select $\mu=0.05, 0.05,0.001$ and $0.01$ for LSTM-GSVM, LSTM-QPSVM, LSTM-GSVDD and LSTM-QPSVDD, respectively. Note that, for the GRU based algorithms in this subsection, we use the same parameter setting with the LSTM based algorithms. Furthermore, we choose $m=5$ and $\lambda=0.5$ for all of the experiments in this subsection in order to maximize the performances of the algorithms.

As can be seen in Table \ref{tab:alldatasets}, due to their inherent memory, both the LSTM and GRU based algorithms achieve considerably high AUC scores compared to the conventional SVM and SVDD algorithms. Moreover, GRU-GSVDD achieves the highest AUC score among all the algorithms, where the LSTM based algorithms (LSTM-GSVM and LSTM-QPSVM) also provide comparable AUC scores. Here, we also observe that the gradient based training method provides higher AUC scores compared to the quadratic programming based training method, which might stem from its separated update procedure that does not guarantee convergence to a certain local minimum.

Other than the occupancy dataset, we also perform an experiment on the HKE rate dataset in order to examine the performances for a real life financial scenario. In this dataset, we have the amount of Hong Kong dollars that one can buy for one US dollar on each day. In order to introduce anomalies to this dataset, we artificially add samples from a Gaussian distribution with the mean and ten times the variance of the training data. Furthermore, using the training data, we select $\mu=0.01, 0.005,0.05$ and $0.05$ for LSTM-GSVM, LSTM-QPSVM, LSTM-GSVDD and LSTM-QPSVDD, respectively. 

In Table \ref{tab:alldatasets}, we illustrate the AUC scores of the algorithms on the HKE rate dataset. Since we have time series data, both the LSTM and GRU based algorithms naturally outperform the conventional methods thanks to their inherent memory, which preserves sequential information. Moreover, since the LSTM architecture also controls its memory content via output gate unlike the GRU architecture \cite{gru}, we obtain the highest AUC scores from LSTM-GSVM. As in the previous cases, the gradient based training method provides better performance than the quadratic programming based training.

We also evaluate the AUC scores of the algorithms on the http dataset \cite{http}. In this dataset, we have 4 features, which are duration (number of seconds of the connection), network service, number of bytes from source to destination and from destination to source. Using these features, we aim to distinguish normal connections from network attacks. In this experiment, we select $\mu=0.01, 0.05,0.001$ and $0.01$ for LSTM-GSVM, LSTM-QPSVM, LSTM-GSVDD and LSTM-QPSVDD, respectively. 

We demonstrate the performances of the algorithms on the http dataset in Table \ref{tab:alldatasets}. Even though all the algorithms achieve high AUC scores on this dataset, we still observe that the LSTM and GRU based algorithms have higher AUC scores than the conventional SVM and SVDD methods. Overall, GRU-QPSVDD achieves the highest AUC score and the quadratic programming based training methods performs better than the gradient based training method on this dataset. However, since the AUC scores are very high and close to each other, we observe only slight performance improvement for our algorithms in this case.

 \begin{table*}[h]
 \centering
 
   \caption{AUC scores of the algorithms for the occupancy, HKE rate, http and Alcoa stock price datasets.}
 \resizebox{2\columnwidth}{.15\columnwidth}{
  \begin{tabular}{|l||*{10}{c|}} \hline
  \backslashbox[3em]{Datasets \kern-2em}{\kern-2em Algorithms}
  &\makebox{SVM}&\makebox{SVDD}&\makebox{LSTM-QPSVM}&\makebox{LSTM-GSVM}&\makebox{LSTM-QPSVDD}&\makebox{LSTM-GSVDD}&\makebox{GRU-QPSVM}&\makebox{GRU-GSVM}&\makebox{GRU-QPSVDD}&\makebox{GRU-GSVDD} \\\hline\hline
 \hspace{.15cm} Occupancy  & $0.8676$& $0.6715$ & $0.8917$& $0.8957$ & $0.7869$& $0.8609$&$ 0.8718$&$0.9049 $&$ 0.7217$&$0.9099 $ \\\hline
  \hspace{.25cm} HKE  & $0.8000$& $0.8500$ & $0.9467$& $0.9783$ & $0.8560$& $0.9753$& $ 0.8479$&$0.9516 $&$ 0.8791$&$0.9517 $ \\\hline
  \hspace{.25cm} Http    & $ 0.9963$& $0.9993 $ & $0.9992 $& $0.9983 $ & $0.9994 $& $0.9994 $&$ 0.9986 $&$ 0.9989 $&$ 0.9999$&$ 0.9994 $ \\\hline
  \hspace{.15cm} Alcoa     & $0.7197 $& $0.9390 $ & $0.9496 $& $ 0.9515$ & $ 0.9415$& $0.9507 $&$ 0.7581 $&$ 0.9392 $&$ 0.9651$&$  0.9392$  \\\hline
   \end{tabular}
  }
\label{tab:alldatasets}
\end{table*}

As the last experiment, we evaluate the anomaly detection performances of the algorithms on another financial dataset, i.e., the Alcoa stock price dataset \cite{alcoa}. In this dataset, we have daily stock price values. As in the HKE rate dataset, we again artificially introduce anomalies via a Gaussian distribution with the mean and ten times the variance of the training data. Moreover, we choose $\mu=0.01, 0.001,0.001$ and $0.005$ for LSTM-GSVM, LSTM-QPSVM, LSTM-GSVDD and LSTM-QPSVDD, respectively. 

In Table \ref{tab:alldatasets}, we illustrate the AUC scores of the algorithms on the Alcoa stock price dataset. Here, we observe that the GRU and LSTM based algorithms achieve considerably higher AUC scores than the conventional methods thanks to their memory structure. Although the LSTM based algorithms have higher AUC in general, we obtain the highest AUC score from GRU-QPSVDD. Moreover, as in the previous experiments, the gradient based training method provides higher performance compared to the quadratic programming based method thanks to its learning capabilities.

\section{Concluding Remarks}\label{sec:conclusion}
In this paper, we study anomaly detection in an unsupervised framework and introduce LSTM based algorithms. Particularly, we have introduced a generic LSTM based structure in order to process variable length data sequences. After obtaining fixed length sequences via our LSTM based structure, we introduce a scoring function for our anomaly detectors based on the OC-SVM \cite{svm1} and SVDD \cite{svdd} algorithms. As the first time in the literature, we jointly optimize the parameters of both the LSTM architecture and the final scoring function of the OC-SVM (or SVDD) formulation. To jointly optimize the parameters of our algorithms, we have also introduced gradient and quadratic programming based training methods with different algorithmic merits, where we extend our derivations for these algorithms to the semi-supervised and fully supervised frameworks. In order to apply the gradient based training method, we modify the OC-SVM and SVDD formulations and then provide the convergence results of the modified formulations to the actual ones. Therefore, we obtain highly effective anomaly detection algorithms, especially for time series data, that are able to process variable length data sequences. In our simulations, due to the generic structure of our approach, we have also introduced GRU based anomaly detection algorithms. Through extensive set of experiments, we illustrate significant performance improvements achieved by our algorithms with respect to the conventional methods \cite{svm1,svdd} over several different real and simulated datasets.
\begin{spacing}{.87}
\bibliographystyle{IEEEtran}
\bibliography{my_ref}
\end{spacing}
\end{document}